\documentclass[letterpaper, 10 pt, conference]{ieeeconf}  

\IEEEoverridecommandlockouts    
\overrideIEEEmargins

\newtheorem{theorem}{Theorem}

\newtheorem{lemma}{Lemma}

\newtheorem{definition}{Definition}
\newtheorem{proposition}{Proposition}

\newtheorem{assumption}{Assumption}

\usepackage{amsmath,amsfonts,amssymb,color}
\usepackage{epsfig}
\usepackage{psfrag}
\usepackage{breqn}
\usepackage{algorithm}
\usepackage{algpseudocode}
\usepackage{array}
\usepackage{epstopdf}
\usepackage{tikz, subcaption, cite}
\usepackage{graphicx}
\usepackage{romannum}
\usepackage{cuted}
\usepackage{scalerel}[2014/03/10]
\usepackage{stackengine}
\usepackage{cleveref}
\usepackage{color}
\usepackage{caption}
\usepackage{subcaption}
\usepackage{mwe}
\usepackage{algpseudocode}
\usepackage{changepage}
\usepackage{enumitem}
\usepackage{fmtcount}
\usetikzlibrary{arrows}
\newcount\Comments  
\Comments=0   
\newcommand{\kibitz}[2]{\ifnum\Comments=0\textcolor{#1}{#2}\fi}

\setlength\stripsep{3pt plus 1pt minus 1pt}




\title{Leveraging Opinions and Vaccination to Eradicate Networked Epidemics}
\author{Humphrey~C.~H.~Leung,~Zhuocong~Li,~Baike~She,~and~Philip~E.~Par\'e
\thanks{Humphrey~C.~H.~Leung,~Zhuocong~Li,~and~Philip~E.~Par\'e are with the School of Electrical and Computer Engineering at Purdue University. Email: {\tt \{leung61, li3975, philpare\}@purdue.edu}. Baike She is affiliated with the Department of Mechanical and Aerospace Engineering at University of Florida. Email: shebaike@ufl.edu. This work is partially supported by the National Science Foundation, grant NSF-ECCS \#2032258.}
}



\begin{document}
\maketitle
\begin{abstract}                
We introduce a multi-layer networked compartmental $SIRS-V_o$ model that captures opinion dynamics, disease spread, risk perception, and self-interest vaccine-uptake behavior in an epidemic process.
We characterize the target vaccination criterion of the proposed model and conditions that guarantee the criterion is obtainable by influencing opinions on disease prevalence.
We leverage this result to design an eradication  strategy that leverages opinions and vaccination. 
Through numerical simulations, we show that the proposed eradication strategy is able to stabilize the epidemic process around a healthy state equilibrium, and the outbreak rebounds after the control signal is relaxed.
\end{abstract}



\section{Introduction}

There has been a long history of studying epidemic mitigation and eradication problems \cite{kermack_1927_a}.
One popular approach for analyzing, predicting, and controlling epidemic spreading processes is to leverage {networked} epidemic spreading models.
However, 
it is also critical to integrate other factors such as human behavior \cite{capasso1978generalization}, opinions \cite{she_2021_peak}, and vaccination \cite{leung2022impact} into the spreading models.
Hence, in order to study the impact of vaccination on epidemic mitigation, as determined by opinions on 
the outbreak's
severity and perceived risk towards the infection, we propose a networked compartmental model by coupling a classic networked susceptible-infected-recovered-susceptible model with 
opinion dynamics.
The proposed model also includes vaccination and the loss of immunity, i.e., the vaccinated population can become susceptible again.

Network-structured compartmental models were implemented to study {disease spreading over} heterogeneous mixed population \cite{vanmieghem_2009_virus,mei_2017_on,par_2020_modeling}. 
{In order to better capture complex spreading behaviors that are affected by various factors,}
researchers proposed modern epidemic models by implementing 
{human}
awareness (opinions)~\cite{weitz_2020_awarenessdriven, she_2021_peak} and vaccination \cite{matthewjameskeeling_2008_modeling,leung2022impact} {to epidemic modeling and control.}
In order to address the critical impact of 
{human awareness}
on vaccination over epidemic spreading networks, mathematical models {from social science} were {imported} to capture these 
{opinions}
in epidemic models. 
An opinion dynamics model was proposed in \cite{Taylor_Towards} which
was 
coupled with epidemic spreading models in
\cite{she_2021_peak}. In addition, multiplex and multi-layer epidemic spreading networks that  incorporate social dynamics and multi-transmission pathways
were studied by \cite{boccaletti2014structure, salehi2015spreading}.



There have been numerous works on game theoretic modeling to study 
the impact of volunteer vaccination on the spread of infectious disease
over networks \cite{wang2016statistical}.
The basic assumption of these game theoretic models is that people act according to their self interest, maximizing their personal payoff. This assumption has been challenged in more recent works.
Specifically, 
simulated data has been used to compare with historical events
to show that imitation dynamics can reproduce salient features of vaccination level over time
\cite{fu2011imitation};
and self-interest is a major factor in vaccine-uptake decisions 
\cite{shim2012influence}.

In this work, we incorporate opinion dynamics and imitation dynamics over networks to construct a novel multi-layer networked epidemic model that captures the aforementioned characteristics of epidemic spreading processes.
The main results and contributions of this work are:
\begin{enumerate}
    \item we propose a novel multi-layer network model to connect the opinions towards the disease prevalence, 
    risk perception on the infection, 
    and self-interest vaccination behavior
    on 
    networked epidemic processes;
    \item we 
    show the existence of the healthy state equilibrium and that it is locally stable; 
    \item we characterize conditions on vaccination levels which ensure disease eradication; and
    \item we propose a control strategy for disease eradication through opinions.
\end{enumerate}
While speech censorship, propaganda, and opinion control are characteristics of malfunctioning dystopian states, we hope that through this work we can open up discussions on the role and limitations that opinions play in epidemic control from an analytical and computational perspective.

We organize the paper as follows: in Section~\ref{sec:model}, we introduce our networked $SIRS-V_o$ model with opinion and imitation dynamics and formulate the problem of interest. 
In Section~\ref{sec:main}, we present the stability analysis of our model and develop a control strategy. 
Section~\ref{sec:sim} demonstrates the numerical results of our work via simulation. 
Section~\ref{sec:conc} concludes the paper and discusses 
potential future research directions.


\subsection*{Notation}
\vspace{-1ex}
Let [\(n\)] denote the index set \{1, 2, ..., \(n\)\}, $\forall$ $n 
\in \mathbb{Z}_{>0}$. We view vectors as column vectors and
write $x^{\top}$ to denote the transpose of a column vector $x$. 
We use $x_i$ to denote the $i$th entry of a vector $x$. 
For any matrix $M \in \mathbb{R}^{n\times n}$, we use $[M]_{i,:}$, $[M]_{:,j}$, and $[M]_{ij}$ to denote its $i$th row, $j$th column, and $ij$th entry, respectively. 
Similarly, we denote $[a_{ij}]_{i,j\in [n]}$ as the $n\times n$ matrix for any $n \times n$ 2-dimensional array $a_{ij}$.
We use $\widetilde M$ = $diag \{m_1,...,m_n\}$ to represent a diagonal matrix $\widetilde M \in \mathbb{R}^{n\times n}$ with $[M]_{ii} = m_i,~\forall i \in [n]$. We use $0_n$ and $1_n$ to denote the vectors whose entries all equal 0 and 1, respectively, and $I_n$ to denote the $n\times n$ identity matrix. 
For a real square matrix $M$, we use $\alpha(M)$, $\rho (M)$, and $\sigma(M)$ to denote the spectral abscissa (the largest real part among its eigenvalues), spectral radius, and singular value of $M$, respectively.
For any two vectors $v,w \in \mathbb{R}^n$, 
we write $v \geq w$ if $v_i \geq w_i$, 
$v > w$ if $v \geq w$ and $v \neq w$, 
$v \gg w$ when $v_i > w_i$, $\forall i,j \in [n]$.
The comparison notations between vectors are used for matrices as well, for instance, for $A,B \in \mathbb{R}^{n\times n}$, $A > B$ indicates that $A_{ij} > B_{ij}$, $\forall i,j \in [n]$.
Consider a directed graph $\mathcal{G} = (\mathcal{V}, \mathcal{E})$, with the node set $\mathcal{V} = \{v_1,...,v_n\}$ and the edge set $\mathcal{E} \subseteq \mathcal{V} \times \mathcal{V}$. 
Let matrix $A \in \mathbb{R}^{n\times n},~ [A]_{ij} = a_{ij}$, denote the adjacency matrix of $\mathcal{G} = (\mathcal{V}, \mathcal{E})$, where $a_{ij} \in \mathbb{R}^+$ if $(v_j, v_i) \in \mathcal{E}$ and $a_{ij} = 0$ otherwise, 
$\forall i \in [n]$. 
Let $k_i[A] = \sum_{j \in \mathcal{N}_i}a_{ij}$, 
where $\mathcal{N}_i = \{v_j:(v_i,v_j)\in \mathcal{E}\}$ denotes the neighbor set of $v_i$. 
The graph Laplacian of $\mathcal{G}$, whose adjacency matrix is $A$, is defined as $L[A] \triangleq \widetilde K[A] - A$, where $\widetilde K[A] \triangleq diag\{k_1[A],...,k_n[A]\}$.

\section{Model And Problem Formulation} \label{sec:model}
In this section, we first introduce a networked ($SIRS-V_o$) model, 
which integrates 
the classic networked susceptible-infected-recovered-susceptible ($SIRS$) model with vaccinations ($V$) and 
opinion ($o$). 
The goal is to incorporate three interacting networks in our compartmental model to capture  the potential impact of media to a disease spreading process through human decision making process.
The three interacting networks include: a disease transition network that captures the disease spreading through physical interactions;
{an} opinion spreading network that captures agents' perception of the current state of the prevalence of the disease; 
and a strategy substitution network that describes the dynamics of agents switching their vaccination strategies to substitute their neighbors' decision of not taking vaccine.

We first consider an epidemic spreading over a network and use a weighted directed graph $\mathcal{G} = (\mathcal{V}, \mathcal{E})$ to capture the physical transmission network.
$ \mathcal{V} =\{v_1,...,v_n\} $ represent{s} {the} set of $n$ nodes and the edge set $\mathcal{E} \subseteq \mathcal{V} \times \mathcal{V}$ represent{s} the disease transmission channels over $\mathcal{V}$.
Each edge is weighted by $\beta_{ij} \in \mathbb{R}_{\geq 0}$, {which} denote{s} the infection rate from host $j$ to $i$. Let $x_i(t) \in [0,1]$, $\forall i \in [n]$, $t \geq t_0$, 
denote the probability that node $i$ is infected at time $t$. 

We define the network of the opinions of the agents toward the prevalence of the epidemic spread
as a directed graph $\mathcal{\bar G} = (\mathcal{V}, \mathcal{\bar E})$, where $\mathcal{\bar E} \subseteq \mathcal{V} \times \mathcal{V}$ represents the opinion exchange over $\mathcal{V}$. 
Each edge is weighted by $a_{ij} \in \mathbb{R^{+}}$, denoting the impact of  node $j$ on node $i$ in terms of opinions. Let $o_i(t) \in [0,1]$, $\forall i \in [n]$ at time $t\geq t_0$, denote node $i$'s belief in the seriousness of the epidemic. 
A higher $o_i(t)$ means that node $i$ 
considers the epidemic to be more serious, and vice versa. 
Further, we assume, for node $i$, 
that a higher infection level in their neighborhood ($x_j(t)$) will raise their belief in the seriousness of the epidemic. 

Lastly, we denote the strategy imitation network as
a weighted directed graph $\mathcal{\hat G} = (\mathcal{V}, \mathcal{\hat E})$, where the edge set $\mathcal{\hat E} \subseteq \mathcal{V} \times \mathcal{V}$ represents
the set of pairs of agents $(i, j)$ where agent $i$ imitates agent $j$'s vaccine-uptake strategy with a non-zero probability.
Furthermore, we assume in this model that each agent $i$ randomly samples another agent $j$ according to some distribution $p_{ij}$; 
if the strategy of the sampled agent provides a higher payoff, then the agent will switch to the new strategy with a probability proportional to the expected payoff gain \cite{bauch2005imitation, hofbauer1998evolutionary}.

The weight of each directed edge $(v_j,v_i)$ is $\eta_{ij}(t)$, and it is defined in the following way.
Let $u^v_i$ be the probability of a serious vaccine side effects, such as allergic reactions,
after vaccine-uptake; 
$u^x_i$ be the probability of significant morbidity after an infection of agent $i$; and 
$p_{ij}$ be the probability of agent $i$'s decision being influenced by the decision of agent $j$
and $[p_{ij}]_{i,j\in[n]}$ is a row stochastic matrix.
Let
$m^v_i$ and $m^x_i$ represent the severity of a serious vaccine side effects and significant morbidity after infection, respectively,
and $c$ be the government compensation for serious vaccine side effects such that $c > m^v_i u^v_i$ for all $i \in [n]$.
Then the payoff of getting vaccinated is $c-m^v_i u^v_i$,
and the \textit{perceived payoff} of $i$ not getting vaccinated is $-o_i(t)(m^x_i u^x_i)$, 
where $o_i(t)$ is the perceived disease prevalence of node $i$ at time $t$.
Note that $c$ can be interpreted as the sum of the expected payoff of vaccine-uptake and expected cost of remaining unprotected that is not captured by $m^x_i u^x_i$ and $m^v_i u^v_i$, such as the inconvenience of mask wearing. 
The above interpretation only suggest one scenario to interpret these parameters. 
The above \textit{non-vaccine-uptake} payoff formulation assumes that each agent 
determines
its payoff evaluation on the current perceived disease prevalence.
We further assume that the vaccine has perfect efficacy; 
an agent takes the vaccine once it switches its strategy to \textit{vaccine-uptake};  
and the agent cannot switch back except through the loss of immunity.
Therefore, the \textit{perceived payoff gain} for agent $i$ 
at time $t$ is $g_{i}(t) = o_i(t)(m^x_i u^x_i) + (c - m^v_i u^v_i)$.

Finally, let $\mu_i$ be the sampling rate of agent $i$ and $q$ be an arbitrary constant. 
Then,
we define the \textit{strategy transition matrix} 
$H(o(t)) \in \mathbb{R}^{n \times n}$ such that $[H(o(t))]_{ij} = \eta_{ij}(t) = q\mu_i g_i(t)p_{ij}(t)$.
We denote 
$\eta^{min}_{ij} = q\mu_i(c - m_i^v u_i^v) p_{ij}$, 
$\Delta \eta_{ij} = q\mu_i(m^x_i u^x_i) p_{ij}$, 
and $\eta_{ij}(t) = o_i(t)\Delta \eta_{ij} + \eta^{min}_{ij}$.
We now formally introduce the networked ($SIRS-V_o$) model:

\vspace{-6ex}

\begingroup\makeatletter\def\f@size{8}\check@mathfonts
\begin{subequations}\label{equ:tot}
\begin{align}
\dot {o}_i(t)&= \sum_{j \in \mathcal{N}_i}  a_{ij}(o_j(t)-o_i(t))+ a_{ij}(x_j(t)-o_i(t)),\label{equ:o}\\
\dot {s}_i(t)&= \delta_iv_i(t) - s_i(t)\sum_{j \in \mathcal{N}_i}[\beta_{ij}x_j(t) + \eta_{ij}(t)v_j(t)]+\omega_i r_i(t),\label{equ:s}\\
\dot {x}_i(t)&= s_i(t)\sum_{j \in \mathcal{N}_i}\beta_{ij}x_j(t) - \gamma_ix_i(t),\label{equ:x}\\
\dot {r}_i(t)&= \gamma_ix_i(t) - \omega_i r_i(t)  -r_i(t)\sum_{j \in \mathcal{N}_i}\eta_{ij}(t) v_j(t),\label{equ:r}\\
\dot {v}_i(t)&=(s_i(t)+r_i(t)) \sum_{j \in \mathcal{N}_i}\eta_{ij}(t)v_j(t) -\delta_i v_i(t),\label{equ:v}
\end{align}
\end{subequations}
\endgroup

\vspace{-2ex}

\noindent
where $s_i$, $x_i$, $r_i$, and $v_i$ represent the percentage of susceptible, infected, removed, and vaccinated populations at node $i$.
Note that $\eta_{ij}(t) = ((1-o_i(t))\eta_{ij}+o_i(t)\eta^{\min}_{ij})$ is a convex combination of $\eta_{ij}$ and $\eta^{\min}_{ij}$.
To further simplify the model, we express \eqref{equ:tot} in a compact form:
%
%
\begingroup\makeatletter\def\f@size{9}\check@mathfonts
\begin{subequations}
\label{mat:tot}
\begin{align}
    \dot{o}(t) &= A(x(t)-o(t))-2L[A]o(t),\label{mat:o}\\
    \dot{s}(t) &= \widetilde Dv(t) -\widetilde{S}(t)(Bx(t) + H(o(t))v(t) + \widetilde Wr(t) 
    ,\label{mat:s}\\
    \dot x(t) &= \widetilde{S}(t)Bx(t) - \widetilde Gx(t),\label{mat:x}\\
    \dot r(t) &= \widetilde Gx(t) - \widetilde Wr(t)-\widetilde{R}(t)H(o(t))v(t),\label{mat:r}\\
    \dot v(t) &= (\widetilde{S}(t)+\widetilde{R}(t)) H(o(t))v(t)- \widetilde Dv(t),\label{mat:v}
\end{align}
\end{subequations}
\endgroup
%
where $\widetilde{S}(t) = diag(s(t))$, $\widetilde{R}(t) = diag(r(t))$, and $L[A]$ is the Laplacian matrix of the opinion spreading graph $\mathcal{\bar{G}}$. 
$\widetilde G$  and $\widetilde D$ are diagonal matrices, with $[\widetilde{G}]_{ii} = \gamma_i$ and $[\widetilde{D}]_{ii} = \delta_i$, $\forall i \in [n]$. 
We define $H(o(t))$ as a function of $o(t)$, where $H(o(t)) = \widetilde{O}(t)\Delta H + H_{\min}$, 
and 
$[H_{\min}]_{ij} = \eta^{\min}_{ij}$ if $\eta_{ij} > 0$, otherwise, $[H_{\min}]_{ij} = 0$.


Recall that our goal is to investigate the potential impact of media on epidemic spreading through human decision processes, we introduce the media actuator $m(t) = [m_i(t)]_{i=1\hdots n}$ and augment \eqref{mat:o} by including the set of media nodes:
\begin{equation} \label{eq:media_input}
    \dot o(t) = A(x(t) - o(t)) - 2L[A]o(t) + \widetilde M(t)(1 - o(t)),
\end{equation}
where $\widetilde M(t) = diag(M(t))$.
The problems that we will answer in this paper include:
\begin{enumerate}
    \item \label{prob:epid_eq} 
    Show the existence and uniqueness of a healthy state equilibrium of system \eqref{mat:tot} and the local stability condition of its infection subsystem.
    \item \label{prob:control} Construct an eradication algorithm stabilizing \ref{mat:x} around $x = 0_n$ through the control of a media actuator $g(t)$, and provide
    the assumptions and conditions that guarantee the validity of the algorithm.
\end{enumerate}
Solutions to these problems will be provided 
in Section~\ref{sec:epid} and Section~\ref{sec:control}, respectively.
\section{Main Results}\label{sec:main}
This section explores the stability conditions, the properties of the opinion dynamics, and the strategies to mitigate the networked epidemic process
in \eqref{mat:tot}. 
The results are developed under the following assumptions.

\begin{assumption}
\label{ass:non-neg}
All parameters are real and non-negative.
\end{assumption}

\begin{assumption}
\label{ass:ini}
The initial conditions at $t_0$ should obey $(o_i(t_0),s_i(t_0),x_i(t_0),r_i(t_0),v_i(t_0)) \in [0,1]$ and $s_i(t_0)+x_i(t_0)+r_i(t_0)+v_i(t_0) = 1,~\forall i \in [n]$.
\end{assumption}

\begin{assumption}\label{ass:op_graph_strong_connect}
We assume that the opinion spreading graph $\bar {\mathcal{G}}$, disease transmission graph ${\mathcal{G}}$, and strategy imitation graph $\mathcal{\hat G}$ are strongly connected.
\end{assumption}

First, one can show that the model in \eqref{mat:tot} is well defined by checking the derivative of a state being always pointing inside the set $[0,1]$, when the state hits the boundary of $[0,1]$. Thus, we omit the proof by giving the following statements.
\begin{lemma}\label{lemma:const}\label{lemma:pos}
For all $t>t_0$ and $i \in [n]$, $s_i(t)+x_i(t)+r_i(t)+v_i(t)=1$ and ($s_i(t)$, $x_i(t)$, $r_i(t)$, $v_i(t)$) $\in [0,1]$. 
\end{lemma}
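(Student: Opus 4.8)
The plan is to verify two invariants for the closed system \eqref{mat:tot}: (i) the simplex constraint $s_i(t)+x_i(t)+r_i(t)+v_i(t)=1$, and (ii) componentwise nonnegativity and boundedness in $[0,1]$. For (i), I would add the four equations \eqref{mat:s}--\eqref{mat:v} componentwise. All the cross terms cancel in pairs: the $\delta_i v_i$ term in $\dot s_i$ cancels $-\delta_i v_i$ in $\dot v_i$; the $-s_i\sum_j \beta_{ij}x_j$ term in $\dot s_i$ cancels $s_i\sum_j\beta_{ij}x_j$ in $\dot x_i$; the $-\gamma_i x_i$ term in $\dot x_i$ cancels $\gamma_i x_i$ in $\dot r_i$; the $\omega_i r_i$ term in $\dot s_i$ cancels $-\omega_i r_i$ in $\dot r_i$; and the imitation terms $-s_i\sum_j\eta_{ij}v_j$, $-r_i\sum_j\eta_{ij}v_j$, and $(s_i+r_i)\sum_j\eta_{ij}v_j$ sum to zero. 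Hence $\frac{d}{dt}(s_i+x_i+r_i+v_i)=0$, so the sum stays at its initial value $1$ by Assumption~\ref{ass:ini}.

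For (ii), I would use the standard boundary/tangency argument: show that on the boundary of the cube $[0,1]^{n}$ in each coordinate, the vector field points inward (or is tangent), so the nonnegative orthant intersected with the simplex is forward invariant. Concretely, when $s_i=0$, \eqref{mat:s} gives $\dot s_i=\delta_i v_i + \omega_i r_i\ge 0$ using Assumption~\ref{ass:non-neg} and the already-established nonnegativity of the other states; similarly $x_i=0 \Rightarrow \dot x_i = s_i\sum_j\beta_{ij}x_j\ge 0$ (the sum runs over $j\in\mathcal N_i$, and each $x_j\ge0$), $r_i=0\Rightarrow \dot r_i=\gamma_i x_i\ge0$, and $v_i=0\Rightarrow \dot v_i=(s_i+r_i)\sum_j\eta_{ij}v_j\ge0$ (here one needs $\eta_{ij}(t)\ge0$, which follows since $\eta_{ij}(t)=(1-o_i)\eta_{ij}+o_i\eta^{\min}_{ij}$ is a convex combination of nonnegative quantities once $o_i\in[0,1]$). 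Together with the invariance of the sum $=1$ from step (i), no coordinate can exceed $1$ either, since if any one hits $1$ the others are $0$ and the preceding inequalities bound the derivative appropriately; so $[0,1]^n$ is never exited.

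The one genuine subtlety, and the reason the paper defers to ``one can show,'' is the coupling with the opinion state: the sign of $\eta_{ij}(t)$ depends on $o_i(t)\in[0,1]$, so the argument for the $s,x,r,v$ block is not self-contained — it needs $o(t)$ to remain in $[0,1]^n$. I would therefore treat the opinion equation \eqref{mat:o} (or \eqref{eq:media_input} with the actuator) first: on $o_i=0$ one has $\dot o_i=\sum_{j\in\mathcal N_i}a_{ij}(o_j+x_j)\ge0$, and on $o_i=1$, $\dot o_i=\sum_{j\in\mathcal N_i}a_{ij}(o_j-1)+a_{ij}(x_j-1)\le0$ since $o_j,x_j\le1$ (with the media term $\widetilde M(t)(1-o(t))$ vanishing at $o_i=1$ and nonnegative at $o_i=0$). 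This closes the loop. Since all the vector fields here are at least locally Lipschitz on $[0,1]^{5n+n}$ (polynomials in the states with bounded coefficients), Nagumo's theorem / a Gronwall-type argument then upgrades these boundary-tangency facts to global-in-time forward invariance, which is exactly the statement of Lemma~\ref{lemma:const}. The main obstacle is purely bookkeeping: making the simultaneous invariance of the opinion block and the epidemic block rigorous rather than circular, by handling $o$ first and then $(s,x,r,v)$.
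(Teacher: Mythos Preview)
Your proposal is correct and follows exactly the approach the paper indicates (but omits): verify conservation of $s_i+x_i+r_i+v_i$ by summing \eqref{equ:s}--\eqref{equ:v}, then check that on each boundary face of $[0,1]$ the vector field points inward, invoking Nagumo's theorem for forward invariance. The only refinement is that your closing remark ``handle $o$ first, then $(s,x,r,v)$'' cannot be taken literally since $\dot o$ depends on $x$; the clean resolution---which you already gesture at---is to apply Nagumo to the full joint set $[0,1]^{5n}$ at once, where each boundary check uses only that the remaining coordinates lie in the set.
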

\begin{lemma}
\label{lem:op}
If $o(t_0) \in [0,1]^n$, $o_i(t)$ varies in the range of [0,1], $\forall i \in [n]$ and $t>t_0$.
\end{lemma}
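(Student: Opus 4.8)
The plan is to show that the box $[0,1]^n$ is positively invariant for the opinion subsystem \eqref{equ:o} (equivalently, its media-augmented form \eqref{eq:media_input}), by a Nagumo-type boundary (sub-tangentiality) argument combined with Lemma~\ref{lemma:pos}, which already guarantees $x_j(t)\in[0,1]$ for all $j\in[n]$ and $t>t_0$. Since the right-hand side of \eqref{eq:media_input} is locally Lipschitz in $o$ (with $x(\cdot)$ a continuous, already-controlled trajectory), it suffices to verify that the vector field points weakly into the box on each of its faces.

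First I would invoke Lemma~\ref{lemma:pos} to fix $x(t)\in[0,1]^n$ for all $t>t_0$, and Assumption~\ref{ass:non-neg} to get $a_{ij}\ge 0$ and $M_i(t)\ge 0$. Then I consider a time $t$ at which $o(t)\in[0,1]^n$ and $o_i(t)=0$ for some $i$: evaluating \eqref{equ:o} with the media term gives $\dot o_i(t)=\sum_{j\in\mathcal{N}_i}a_{ij}o_j(t)+\sum_{j\in\mathcal{N}_i}a_{ij}x_j(t)+M_i(t)\ge 0$, since every summand is nonnegative. Symmetrically, if $o(t)\in[0,1]^n$ and $o_i(t)=1$, then $\dot o_i(t)=\sum_{j\in\mathcal{N}_i}a_{ij}(o_j(t)-1)+\sum_{j\in\mathcal{N}_i}a_{ij}(x_j(t)-1)\le 0$, because $o_j(t)\le 1$, $x_j(t)\le 1$, and the media contribution $M_i(t)(1-o_i(t))$ vanishes at $o_i=1$. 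Hence on every lower face $o_i=0$ and every upper face $o_i=1$ the field is (weakly) inward-pointing.

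To convert these face inequalities into invariance I would argue by contradiction with the exit time $t^\star=\inf\{t>t_0:o(t)\notin[0,1]^n\}$: if $t^\star<\infty$, then by continuity of $o(\cdot)$ and closedness of $[0,1]^n$ we have $o(t^\star)\in[0,1]^n$ with some coordinate $k$ at the boundary, and the computation above shows $\dot o_k(t^\star)\ge0$ when $o_k(t^\star)=0$ (resp. $\le 0$ when $o_k(t^\star)=1$), which — via the standard sub-tangentiality/Nagumo argument, or an $\varepsilon$-comparison estimate on $o_k$ — rules out $o_k$ leaving $[0,1]$ immediately after $t^\star$. Equivalently, one may directly cite Nagumo's theorem for the closed convex set $[0,1]^n$.

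The argument is essentially routine; the only points needing care are (i) that the $o$-dynamics are driven by $x$, so Lemma~\ref{lemma:pos} must be in hand first (it is), and (ii) making the exit-time contradiction rigorous given that the boundary inequalities are non-strict, which is the usual technicality handled by the Nagumo machinery or a small perturbation. The media actuator introduces no extra difficulty, as its term is nonnegative on the box and identically zero on the face $o_i=1$.
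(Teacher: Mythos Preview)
Your proposal is correct and matches the paper's indicated approach: the paper explicitly omits the proof of Lemma~\ref{lem:op}, stating only that one checks the derivative points into $[0,1]$ whenever the state hits the boundary, which is precisely the Nagumo/sub-tangentiality argument you carry out in detail. Your treatment is more careful than the paper's one-line remark (handling the media term and the non-strict boundary inequalities), but the underlying idea is identical.
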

%
\begin{proposition}
\label{prop:1}
Without the loss of vaccine immunity, the vaccinated population monotonically increases. 
\end{proposition}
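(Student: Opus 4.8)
The plan is to obtain the claim as a direct sign argument on the right-hand side of the $v$-dynamics, once the immunity-loss terms are switched off. ``Without the loss of vaccine immunity'' amounts to setting $\delta_i = 0$ for all $i \in [n]$, which reduces \eqref{equ:v} to $\dot v_i(t) = (s_i(t) + r_i(t))\sum_{j \in \mathcal{N}_i}\eta_{ij}(t)v_j(t)$. The first step is to invoke Lemma~\ref{lemma:pos} to conclude that $s_i(t), r_i(t), v_j(t) \in [0,1]$ for all $t > t_0$, so every state entry appearing on the right-hand side is non-negative.

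The second step is to verify that the time-varying weights satisfy $\eta_{ij}(t) \ge 0$. Recalling $\eta_{ij}(t) = o_i(t)\Delta\eta_{ij} + \eta^{\min}_{ij}$ with $\Delta\eta_{ij} = q\mu_i(m^x_i u^x_i)p_{ij}$ and $\eta^{\min}_{ij} = q\mu_i(c - m^v_i u^v_i)p_{ij}$, both $\Delta\eta_{ij}$ and $\eta^{\min}_{ij}$ are non-negative under Assumption~\ref{ass:non-neg} together with the standing requirement $c > m^v_i u^v_i$, and Lemma~\ref{lem:op} gives $o_i(t) \in [0,1]$; hence $\eta_{ij}(t) \ge 0$. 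Equivalently, one reads this off the entrywise non-negativity of $H(o(t)) = \widetilde O(t)\Delta H + H_{\min}$, since $o(t) \geq 0_n$ and $\Delta H, H_{\min}$ have non-negative entries.

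Putting the two steps together, each summand $\eta_{ij}(t)v_j(t)$ is non-negative and is multiplied by the non-negative scalar $s_i(t) + r_i(t)$, so $\dot v_i(t) \ge 0$ for all $t > t_0$ and all $i \in [n]$. Thus every $v_i(\cdot)$ is non-decreasing on $(t_0, \infty)$, and in particular the aggregate vaccinated fraction $1_n^\top v(t) = \sum_i v_i(t)$ is non-decreasing, which is the asserted monotonicity.

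I do not expect a genuine obstacle here: this is a pure vector-field positivity argument, and the only care needed is to order the invocations correctly --- non-negativity of the states (Lemma~\ref{lemma:pos}) and of $o(t)$ (Lemma~\ref{lem:op}), as well as of the raw parameters (Assumption~\ref{ass:non-neg}), must all be in place before concluding $\eta_{ij}(t)v_j(t) \ge 0$, because the $v$-equation couples these quantities. If a sharper statement is desired, one can additionally note that the increase is strict at times when some in-neighbor $j$ has $v_j(t) > 0$ while the corresponding $\eta_{ij}(t)$ and $s_i(t) + r_i(t)$ are positive, so $v$ strictly increases away from the all-unvaccinated configuration $v \equiv 0_n$.
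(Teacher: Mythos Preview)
Your argument is correct and matches the paper's approach: the paper omits the proof of Proposition~\ref{prop:1}, indicating only that it follows from the same derivative-sign reasoning used for Lemmas~\ref{lemma:pos} and~\ref{lem:op}, which is precisely what you carry out by setting $\delta_i=0$ in \eqref{equ:v} and invoking those lemmas together with Assumption~\ref{ass:non-neg}.
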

Let $( o^*, s^*, x^*, r^*, v^*)$ denote the equilibria of the $SIRS-V_o$  model.
We start to analyze the equilibria of the $SIRS-V_o$ model through the following definition.
\begin{definition}(Healthy State Equilibrium). \emph{A healthy state equilibrium} is 
an equilibrium with the steady-state vector of infected populations $x^*=0_n$
, where $\dot s^* = \dot x^* = \dot r^* = \dot v^* = \dot o^* =0_n$.
\end{definition}

\subsection{Healthy State Equilibrium}\label{sec:epid}
In this subsection,
we 
prove that the healthy state equilibrium of the coupled networked system in \eqref{mat:tot} exists
and derive 
the stability condition of the infection subsystem around any healthy state equilibrium.
\begin{lemma}\label{lemma:healthy}
At any healthy state equilibrium, where $x^*=0_n$, the recovered population $r^* = 0_n$. Furthermore, 
$s^*+v^*=1_n$.
\end{lemma}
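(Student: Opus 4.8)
The plan is to use the equilibrium conditions $\dot r^* = \dot v^* = 0_n$ together with the nonnegativity guaranteed by Lemma~\ref{lemma:const} and Lemma~\ref{lemma:pos}. First I would set $x^* = 0_n$ in \eqref{mat:r}: since $\widetilde G x^* = 0_n$, the equilibrium equation becomes $0_n = -\widetilde W r^* - \widetilde R^* H(o^*) v^*$, i.e.\ $\widetilde W r^* + \widetilde R^* H(o^*) v^* = 0_n$. Both terms on the left are entrywise nonnegative: $\widetilde W r^* \geq 0_n$ because $r^* \geq 0_n$ and $\omega_i \geq 0$ by Assumption~\ref{ass:non-neg}; and $\widetilde R^* H(o^*) v^* \geq 0_n$ because $r^*, o^*, v^* \geq 0_n$ and all the rate parameters defining $H(o^*) = \widetilde O^* \Delta H + H_{\min}$ are nonnegative. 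A sum of two nonnegative vectors that equals zero forces each to be zero, so in particular $\widetilde W r^* = 0_n$. Provided $\omega_i > 0$ for all $i$ (the loss-of-immunity rates are strictly positive, which should be invoked or assumed), this yields $r^* = 0_n$.

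Next, for $s^* + v^* = 1_n$, I would combine this with Lemma~\ref{lemma:const}: at equilibrium $s_i^* + x_i^* + r_i^* + v_i^* = 1$ for every $i$, and we have just shown $x^* = 0_n$ and $r^* = 0_n$, so $s_i^* + v_i^* = 1$ for all $i$, i.e.\ $s^* + v^* = 1_n$. This second part is immediate once the first is established.

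The main obstacle is the degenerate case $\omega_i = 0$: if some node has no loss of immunity, then $\widetilde W r^* = 0_n$ does not pin down $r_i^*$ from that equation alone, and one must instead argue via \eqref{mat:v} and \eqref{mat:s} (or Proposition~\ref{prop:1}, which treats exactly the $\omega_i = 0$ regime where the vaccinated population is monotone) that the recovered mass still drains out. I would handle this either by adding the standing assumption $\omega_i > 0$ — consistent with the paper's interest in the SIRS (rather than SIR) setting — or, if full generality is wanted, by noting that when $\omega_i = 0$ the term $-\widetilde R^* H(o^*) v^*$ alone must vanish, and then showing via the $\dot v^*$ equation together with strong connectivity of $\hat{\mathcal G}$ (Assumption~\ref{ass:op_graph_strong_connect}) that $H(o^*) v^* = 0_n$ is incompatible with a persistent recovered population unless $r^* = 0_n$. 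The clean statement almost certainly intends $\omega_i > 0$, so I would lead with that.
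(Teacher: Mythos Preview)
Your proposal is correct and follows essentially the same argument as the paper: set $x^*=0_n$ in \eqref{mat:r}, observe that the two remaining nonnegative terms must each vanish, use $\omega_i>0$ to conclude $r^*=0_n$, and then invoke Lemma~\ref{lemma:const} for $s^*+v^*=1_n$. The only cosmetic difference is that the paper writes $H_{\min}$ in place of your $H(o^*)$ (implicitly anticipating Lemma~\ref{lem:healthycons}); your version avoids that forward reference and is slightly cleaner, but the logic is identical.
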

\begin{proof}
According to \eqref{mat:r}, when $x^* = 0_n$, $$\widetilde Wr^*+\widetilde{R}^*H_{min} v^*=0_n,$$ where $[\widetilde W]_{ii} > 0$, $v^* \geq 0_n$, and $H_{min} \geq 0_{n \times n}$, $\forall i \in [n]$. Therefore, $r^* = 0_n$, and $s^*+v^*=1_n$ via Lemma~\ref{lemma:const}.
\end{proof}
\begin{lemma}\label{lem:healthycons}
At any healthy state equilibrium, where $x^*=0_n$,
the opinions will reach consensus at $o_i^* = 0$, $\forall i \in [n]$.
\end{lemma}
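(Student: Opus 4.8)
The plan is to combine the equilibrium condition for the opinion subsystem with the fact, already established in Lemma~\ref{lem:op}, that $o_i^* \in [0,1]$ for every $i$. Setting $\dot o^* = 0_n$ and $x^* = 0_n$ in \eqref{mat:o} gives $-A o^* - 2L[A] o^* = 0_n$, that is,
\begin{equation*}
(A + 2L[A])\, o^* = 0_n .
\end{equation*}
Substituting the definition $L[A] = \widetilde K[A] - A$ turns this into $(2\widetilde K[A] - A)\, o^* = 0_n$, so the whole question reduces to showing that the only solution of this linear system is $o^* = 0_n$, which is trivially a consensus configuration.

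The key observation is that $2\widetilde K[A] - A$ is nonsingular. Its $i$th diagonal entry is $2k_i[A] - a_{ii}$, while the sum of the magnitudes of its off-diagonal entries in row $i$ is $\sum_{j \ne i} a_{ij}$. Since $\bar{\mathcal{G}}$ is strongly connected (Assumption~\ref{ass:op_graph_strong_connect}), every node has at least one out-neighbor, hence $k_i[A] = \sum_{j \in \mathcal{N}_i} a_{ij} > 0$, giving $2k_i[A] - a_{ii} = k_i[A] + \sum_{j\ne i} a_{ij} > \sum_{j\ne i} a_{ij}$. Thus $2\widetilde K[A] - A$ is strictly diagonally dominant with positive diagonal, hence invertible, and therefore $o^* = 0_n$. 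An alternative route, which avoids invoking diagonal dominance, is a discrete maximum-principle argument on the scalar form \eqref{equ:o}: at equilibrium it reads $\sum_{j\in\mathcal N_i} a_{ij} o_j^* = 2 k_i[A]\, o_i^*$; letting $i^\star$ attain $\max_i o_i^*$ and bounding the left-hand side by $k_{i^\star}[A]\, o_{i^\star}^*$ forces $k_{i^\star}[A]\, o_{i^\star}^* \le 0$, so $\max_i o_i^* \le 0$, and combining with $o^* \ge 0_n$ from Lemma~\ref{lem:op} again yields $o^* = 0_n$.

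I do not expect a deep obstacle here; the two points that need care are (i) guaranteeing $k_i[A] > 0$, which is exactly where strong connectivity of $\bar{\mathcal G}$ is used, and (ii) recognizing that it is the infection-coupling term $a_{ij}(x_j - o_i)$ in \eqref{equ:o} — not the pure Laplacian term — that removes the usual consensus-manifold degeneracy and pins the common opinion value to $0$: without that term, $A + 2L[A]$ would collapse to $2L[A]$, which is singular with $1_n$ in its kernel, and the lemma would only give consensus at an undetermined constant. This also meshes with Lemma~\ref{lemma:healthy}, so the full healthy-state equilibrium is then determined up to the split $s^* + v^* = 1_n$.
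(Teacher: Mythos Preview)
Your proof is correct and essentially matches the paper's: both reduce to showing that $A+2L[A]=2\widetilde K[A]-A$ is nonsingular, and your strict-diagonal-dominance computation is exactly the Gershgorin-disc argument the paper uses (the paper phrases it as all discs of $-(L[A]+\widetilde K[A])$ lying strictly in the open left half-plane). Your explicit remark that strong connectivity of $\bar{\mathcal G}$ is what guarantees $k_i[A]>0$ is a point the paper leaves implicit, and your second, maximum-principle argument is a nice self-contained alternative that the paper does not give.
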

\begin{proof}
At an equilibrium where $x^*=0_n$, the opinions in \eqref{mat:o} satisfy
\begin{align*}
    0_n &= A(0_n-o^*)-2L[A]o^*\\
    &=-(L[A]+\widetilde{K}[A])o^*.
\end{align*}
Based on the Gershgorin circle theorem, we have that the spectrum of the matrix lies within at least one of the Gershgorin discs, where the center of the $i$th disc is given by $c_{ii}=[-(L[A]+\widetilde{K}[A])]_{ii}=-2k_i[A]$, and the corresponding  radius of the disc is $[L[A]]_{ii}=k_i[A]$, $\forall i \in [n]$. Hence, all discs are strictly located on the left-hand side of the complex plane, which indicates that the matrix $-(L[A]+\widetilde{K}[A])$ is negative definite, i.e., the null space of the matrix contains only $0_n$. Further, at an equilibirium where $x^*=0_n$, the only solution for \eqref{mat:o} is $o^*=0_n$, which completes the proof. 
\end{proof}


The above lemmas outline some important characteristics about the healthy state equilibrium when it exists, which will be useful when proving the following proposition.
\begin{proposition}\label{Th:healthy_state_ex_n_uniq}
If $\alpha(-D + H_{min}) \leq 0$, then $(0_n, 1_n, 0_n, 0_n, 0_n)$  is the unique healthy state equilibrium of system~\eqref{mat:tot}.
If $\alpha(-D + H_{min}) > 0$, then system~\eqref{mat:tot} has two healthy state equilibria, $(0_n, 1_n, 0_n, 0_n, 0_n)$ and $(0_n, 1_n - v^*, 0_n, 0_n, v^*)$, where $v^* \gg 0_n$.
\end{proposition}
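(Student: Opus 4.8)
The plan is to use the structural lemmas already established to collapse all five equilibrium conditions of \eqref{mat:tot} into a single fixed-point equation in $v^{*}$, and then to recognize that equation as the familiar networked-$SIS$ endemic-equilibrium equation, whose solution set is dichotomized by the sign of $\alpha(-\widetilde{D}+H_{\min})$ (the matrix $D$ in the statement being $\widetilde{D}$). At a healthy state equilibrium, $x^{*}=0_{n}$ by definition, $r^{*}=0_{n}$ and $s^{*}=1_{n}-v^{*}$ by Lemma~\ref{lemma:healthy}, and $o^{*}=0_{n}$ by Lemma~\ref{lem:healthycons}, so $H(o^{*})=H(0_{n})=H_{\min}$. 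Substituting $\widetilde{R}^{*}=0$ and $\widetilde{S}^{*}=I_{n}-\widetilde{V}^{*}$, where $\widetilde{V}^{*}=diag(v^{*})$, into \eqref{mat:v} gives
\begin{equation*}
(-\widetilde{D}+H_{\min})\,v^{*}=\widetilde{V}^{*}H_{\min}v^{*},
\end{equation*}
and one checks directly that with these values the remaining equations \eqref{mat:o}, \eqref{mat:s}, \eqref{mat:x}, \eqref{mat:r} also vanish (for \eqref{mat:s} using $s+x+r+v\equiv 1_{n}$ from Lemma~\ref{lemma:const}); hence the displayed equation, restricted to $v^{*}\in[0,1]^{n}$, characterizes the healthy state equilibria. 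Componentwise it reads $(1-v^{*}_{i})[H_{\min}v^{*}]_{i}=\delta_{i}v^{*}_{i}$, i.e.\ $v^{*}$ is a fixed point of the map $F:[0,1]^{n}\to[0,1]^{n}$ with $F_{i}(v)=[H_{\min}v]_{i}/(\delta_{i}+[H_{\min}v]_{i})$ (and $F_{i}(v)=0$ when $[H_{\min}v]_{i}=0$); the point $v^{*}=0_{n}$, giving $(0_{n},1_{n},0_{n},0_{n},0_{n})$, is always a fixed point.

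Next I would analyze the nonzero fixed points of $F$. The map $F$ is continuous, order-preserving, and concave, with $F(0_{n})=0_{n}$ and Jacobian $\widetilde{D}^{-1}H_{\min}$ at $0_{n}$; concavity gives $F(v)\le\widetilde{D}^{-1}H_{\min}v$ for all $v\ge 0_{n}$, with strict inequality in every component when $v\gg 0_{n}$ because $H_{\min}$ is irreducible (Assumption~\ref{ass:op_graph_strong_connect}) and $\delta_{i}>0$. By the standard analysis of monotone concave maps --- equivalently, the classical networked-$SIS$ endemic-equilibrium argument --- $F$ has a nonzero fixed point if and only if $\rho(\widetilde{D}^{-1}H_{\min})>1$, and when it does the nonzero fixed point is unique, satisfies $v^{*}\gg 0_{n}$ (if $v^{*}_{i}=0$ then $[H_{\min}v^{*}]_{i}=0$, forcing $v^{*}_{j}=0$ on every in-neighbor of $i$, hence $v^{*}=0_{n}$ by strong connectivity), and satisfies $v^{*}_{i}<1$ for each $i$ by the fixed-point form, so $s^{*}=1_{n}-v^{*}\gg 0_{n}$ as well. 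Using the standard Metzler-matrix equivalence $\rho(\widetilde{D}^{-1}H_{\min})>1\Leftrightarrow\alpha(-\widetilde{D}+H_{\min})>0$ and the fact that every nonzero fixed point is $\gg 0_{n}$ and unique, this gives exactly the two healthy state equilibria claimed when $\alpha(-\widetilde{D}+H_{\min})>0$. If instead $\alpha(-\widetilde{D}+H_{\min})\le 0$, i.e.\ $\rho(\widetilde{D}^{-1}H_{\min})\le 1$, any nonzero fixed point would be $\gg 0_{n}$ and hence satisfy $v^{*}=F(v^{*})\ll\widetilde{D}^{-1}H_{\min}v^{*}$; pairing with the positive left Perron eigenvector of $\widetilde{D}^{-1}H_{\min}$ then forces $\rho(\widetilde{D}^{-1}H_{\min})>1$, a contradiction, so $(0_{n},1_{n},0_{n},0_{n},0_{n})$ is the unique healthy state equilibrium.

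The crux of the argument is the claim invoked in the second paragraph: a nonzero fixed point of $F$ exists if and only if $\rho(\widetilde{D}^{-1}H_{\min})>1$, and is then unique. I would either cite the analogous result from the networked-$SIS$ literature essentially verbatim --- the equation $(-\widetilde{D}+H_{\min})v^{*}=\widetilde{V}^{*}H_{\min}v^{*}$ is exactly the networked-$SIS$ equilibrium equation with transmission matrix $H_{\min}$ and recovery matrix $\widetilde{D}$ --- or reprove it via a monotone-iteration / Krasnoselskii fixed-point argument for existence together with a sub-homogeneity (scaling) argument for uniqueness. Two points deserve care: the boundary case $\alpha(-\widetilde{D}+H_{\min})=0$, which I handled above through the \emph{strict} concavity of $F$ coming from irreducibility rather than plain concavity, and the standing requirement $\delta_{i}>0$ for all $i$ (needed for $F$ to be well defined, for the strict concavity, and for $v^{*}_{i}<1$), which should be stated explicitly alongside Assumption~\ref{ass:non-neg}.
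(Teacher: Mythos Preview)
Your proposal is correct and follows essentially the same route as the paper: use Lemmas~\ref{lemma:const}, \ref{lemma:healthy}, and \ref{lem:healthycons} to reduce every healthy state equilibrium to a fixed point of $\dot v=(I-\widetilde V)H_{\min}v-\widetilde D v$, and then invoke the classical networked-$SIS$ dichotomy with $(H_{\min},\widetilde D)$ playing the role of $(B,\widetilde G)$. The paper dispatches this last step by a direct citation to \cite[Prop.~2]{liu2019analysis}, whereas you unpack the monotone-concave fixed-point argument in detail (including the boundary case $\alpha(-\widetilde D+H_{\min})=0$ and the requirement $\delta_i>0$); this is more self-contained but not a different approach.
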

\begin{proof}
By Lemma~\ref{lemma:const}, \ref{lemma:healthy}, and \ref{lem:healthycons}, 
if there exists a healthy state equilibrium, 
then it will have the form $(0_n, 1_n - v^*, 0_n, 0_n, v^*)$.
Furthermore, system~\eqref{mat:tot} can be simplified to: 
\begin{equation}\label{eq:healthy_sdot}
    \dot{v}(t) = (I - \widetilde{V}(t))H_{\min}v(t) - \widetilde{D}v(t),
\end{equation}
where $H(o(t)) = H_{\min}$ since $o(t) = 0_n$.
Therefore, to show that a healthy state equilibrium exists,
we limit our search of the healthy state equilibrium within the domain of $\{0_n\} \times [0, 1]^n \times \{0_n\} \times \{0_n\} \times [0, 1]^n$ through Brouwer's fixed-point theorem \cite[pg.~20]{brown1993topological}.
After the above simplifications, the problem of showing the existence and uniqueness of the healthy state equilibrium of system~\eqref{mat:tot} is reduced to the equilibrium analysis of the single viral networked SIS model, 
by replacing $v$ with $x$ and $H_\min$ with $B$.
The desired result then follows directly from \cite[Prop.~2]{liu2019analysis} since $H_\min$ is irreducible by Assumption~\ref{ass:op_graph_strong_connect}.
\end{proof}
Proposition~\ref{Th:healthy_state_ex_n_uniq} implies that 
the networked model in 
\eqref{mat:tot} has a unique healthy equilibrium when $\alpha(-D + H_\min) < 0$, where the susceptible and vaccinated population within the community will reach a balance at a unique ratio, which is determined by the model parameters.

\begin{lemma}\label{lem:x_rep_num}
The infection subsystem \eqref{mat:x} is locally asymptotically stable around $x^* = 0$, if 
$\rho(\widetilde G^{-1}\widetilde{S}^* B) < 1$.
\end{lemma}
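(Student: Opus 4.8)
The plan is to apply Lyapunov's indirect method to the infection subsystem linearized about a healthy state equilibrium. First I would use the definition of the healthy state equilibrium together with Lemma~\ref{lemma:healthy} to fix the equilibrium point: at $x^*=0_n$ one has $r^*=0_n$ and $s^*+v^*=1_n$, so $\widetilde S^* = I_n - \widetilde V^*$. I would then linearize the full coupled system \eqref{mat:tot} about such an equilibrium and read off the $x$-block of the Jacobian. From $\dot x_i(t) = s_i(t)\sum_{j\in\mathcal N_i}\beta_{ij}x_j(t) - \gamma_i x_i(t)$ we get $\partial \dot x_i/\partial x_k = s_i\beta_{ik}$ for $k\neq i$ and $\partial \dot x_i/\partial x_i = s_i\beta_{ii} - \gamma_i$, i.e. $\partial \dot x/\partial x = \widetilde S^* B - \widetilde G$ at the equilibrium, while $\partial \dot x_i/\partial s_i = [Bx]_i$ and every other cross-partial (through the entries of $\widetilde S(t)$, i.e. $s$, and through $o$, $r$, $v$) is proportional to $[Bx]_i$, which vanishes at $x^*=0_n$. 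Hence the linearization of the infection subsystem decouples from the remaining states and is $\dot{\delta x}(t) = (\widetilde S^* B - \widetilde G)\,\delta x(t)$, so by Lyapunov's indirect method it suffices to show $\alpha(\widetilde S^* B - \widetilde G) < 0$.

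For the second step I would exploit the sign structure. By Assumption~\ref{ass:non-neg}, $B\geq 0_{n\times n}$ and $\widetilde S^*\geq 0$, so $\widetilde S^* B$ is nonnegative, and $\widetilde G$ is a positive diagonal matrix (each $\gamma_i>0$), making $\widetilde S^* B - \widetilde G$ Metzler. I would then invoke the classical equivalence for Metzler matrices of the form $F-V$ with $F\geq 0_{n\times n}$ and $V$ a nonsingular diagonal M-matrix: $\alpha(F-V)<0$ if and only if $V-F$ is a nonsingular M-matrix, if and only if $\rho(V^{-1}F)<1$. Taking $F=\widetilde S^* B$ and $V=\widetilde G$, and noting that $\widetilde G^{-1}\widetilde S^* B$ and $\widetilde S^* B\,\widetilde G^{-1}$ are similar via $\widetilde G$ and therefore share the same spectral radius, the hypothesis $\rho(\widetilde G^{-1}\widetilde S^* B)<1$ gives $\alpha(\widetilde S^* B - \widetilde G)<0$. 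Combining with the first step yields local asymptotic stability of $x^*=0_n$, as claimed.

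I expect the only delicate point to be the first step, namely arguing cleanly that the local stability of the infection subsystem is governed by $\widetilde S^* B - \widetilde G$ alone rather than by a larger Jacobian block mixing in $\delta s,\delta r,\delta v,\delta o$; this is resolved by the observation that all such cross-terms carry a factor of $x^*=0_n$ (alternatively, a monotone/comparison-systems bound on $\dot x(t)$ in a small neighborhood of the equilibrium together with continuity of the spectral abscissa could be used, but the Jacobian route is cleaner). The M-matrix equivalence in the second step is standard and can simply be cited, so no substantial computation is required.
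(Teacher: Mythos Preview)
Your argument is correct and follows essentially the same route as the paper: linearize the infection subsystem to obtain the Jacobian block $\widetilde S^* B - \widetilde G$, then use the standard equivalence $\alpha(\widetilde S^* B - \widetilde G)<0 \Leftrightarrow \rho(\widetilde G^{-1}\widetilde S^* B)<1$ for nonnegative $\widetilde S^* B$ and positive diagonal $\widetilde G$. The paper is terser---it simply cites \cite[Prop.~1]{liu2019analysis} for the spectral equivalence and relies on irreducibility of $B$ (Assumption~\ref{ass:op_graph_strong_connect})---whereas you spell out the decoupling of the $x$-block from the other states and invoke the M-matrix characterization directly, which has the minor advantage of not requiring irreducibility.
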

\begin{proof}
    Considering the Jacobian matrix of \eqref{mat:x}, we have that if the spectral abscissa of $\widetilde S(t)B - \widetilde G$ is less than $0$, then \eqref{mat:x} is locally asymptotically stable around $x^* = 0$.
    Note that $\widetilde{S}^*$ is a positive diagonal matrix and $B$ is irreducible by Assumption~\ref{ass:op_graph_strong_connect}.
    The result then follows as a direct application of \cite[Prop. 1]{liu2019analysis} to~\eqref{mat:x}.
\end{proof}
Since $S^* \leq 1_n$, $\rho(\widetilde G^{-1}B) < 1$ implies $\rho(\widetilde G^{-1}\widetilde{S}^* B) < 1$ and in turn implies local asymptotic stability of the infection subsystem \eqref{mat:x}. We define $\mathcal{R}_0 = \rho(\widetilde {G}^{-1}B)$ as the \textit{basic reproduction number} and $\mathcal{R}_t = \rho(\widetilde G^{-1}\widetilde{S}(t) B)$ as the \textit{effective reproduction number} of the subsystem \eqref{mat:x}.
\subsection{Control Strategy}\label{sec:control}
In this section, we consider a fixed opinion media actuator, which connects to each node $i$ in the opinion network with strength $m_i(t) \in [0, \infty)$.
Equation \eqref{equ:o} can be rewritten as:
\begingroup\makeatletter\def\f@size{8}\check@mathfonts
\begin{equation*}
    \dot {o}_i(t) = m_i(t)(1 - o_i(t)) + \sum_{j \in \mathcal{N}_i} a_{ij}(o_j(t)-o_i(t))+ a_{ij}(x_j(t)-o_i(t))
\end{equation*}
\endgroup
to include the influence of the actuator, and the corresponding matrix form of the system with the actuator input is:
\begingroup\makeatletter\def\f@size{9}\check@mathfonts
\begin{equation}
    \dot{o}(t) = A(x(t)-o(t))-2L[A]o(t) + \widetilde{M}(t)(1_n - o(t)),\label{mat:o_w_input}
\end{equation}
\endgroup
where $\widetilde{M}(t) = diag(m(t))$.

Intuitively, if $m_i(t)$ are maintained on a sufficiently high level, then the epidemic will eventually die out because individuals will be motivated to take the vaccination due to their perception of the high disease prevalence. 
The main goal of this section is to derive a lower bound on $m(t)$ that is sufficient to guarantee the local stability of the infection subsystem \eqref{mat:x} around the healthy state equilibrium.
To achieve this goal, we will introduce two concepts which we will formally characterize for system~\eqref{mat:tot} with the media actuator in \eqref{mat:o_w_input} later in the section.
They are the \textit{target vaccination criterion} and the \textit{target opinion criterion}:
\begin{definition}[Target Vaccination Criterion]\label{def:target_vaccination_criterion}
    We call $v_c$ 
    a target vaccination criterion, 
    if for every $v(t)$ that satisfies the target vaccination criterion $v_c$, that is, $v_i(t) > v_c$ for all $i\in [n]$ and $t \geq T$ for some $T \in \mathbb{R}_{\geq 0}$,
    the healthy state equilibrium is locally asymptotically stable 
    for the corresponding infection subsystem \eqref{mat:x}.
\end{definition}
In other words, if the vaccination levels of all $n$ nodes 
stay above the target vaccination criterion, $v_c$, when time $t$ is sufficiently large, then the epidemic will die out.
Furthermore, since the eradication solution is trivial if $v_c \leq 0$ and infeasible if $v_c = 1$ 
, we only consider the case where $v_c \in (0, 1)$.
\begin{definition}[Target Opinion Criterion]\label{def:target_opinion_criterion}
    We define $\underline o_i^{v_c}$ as a target opinion criterion with respect to $v_c$
    if 
    $v_i(t) > v_c$
    when $o_i(t) \geq \underline o_i^{v_c}$ 
    for all $i \in [n]$ and $t \geq T$, where $T \geq 0$.
\end{definition}
The target opinion criterion is a lower bound on the opinion that guarantees the vaccination level at node $i$ to remain above the target vaccination criterion.
The control criterion $\underline m^{v_c}$ will be expressed as a lower bound of $m(t)$ in terms of the target opinion criterion $\underline o_i^{v_c}$ such that the vaccination level at each node $v_i(t)$ remains above $v_c$ despite loss of immunity. 
Before introducing the key results, we make the following assumption for the control strategy. 
\begin{assumption}\label{asm:symm_B}
    The infection transmission matrix $B$ is symmetric, i.e. $B = B^\top$.
\end{assumption}
The following theorem characterizes the target vaccination criterion of \eqref{mat:tot}.
\begin{theorem}
\label{thm:sufficient_x_stable}
    Under Assumptions~\ref{ass:op_graph_strong_connect} and \ref{asm:symm_B},
    \begin{equation}\label{eq:target_vaccination_criterion}
        v_c = 1 - \frac{1}{\rho(\widetilde G^{-1})\rho(B)}
    \end{equation}
    is a target vaccination criterion.
\end{theorem}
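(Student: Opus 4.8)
The plan is to collapse the statement onto the spectral condition that Lemma~\ref{lem:x_rep_num} already isolates, and then bound that spectral radius using the diagonal structure of $\widetilde G^{-1}$ and $\widetilde S^*$. By Definition~\ref{def:target_vaccination_criterion} it suffices to take an arbitrary trajectory of \eqref{mat:tot} with $v_i(t) > v_c$ for all $i\in[n]$ and all $t\geq T$, and to establish local asymptotic stability of the healthy state equilibrium for its infection subsystem \eqref{mat:x}. By Lemma~\ref{lem:x_rep_num} this is equivalent to showing $\rho(\widetilde G^{-1}\widetilde S^* B) < 1$, where $\widetilde S^* = diag(s^*)$ is the susceptible block of the healthy state equilibrium associated with that trajectory. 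So the first step is to pin down $\widetilde S^*$: by Lemma~\ref{lemma:healthy} every healthy state equilibrium has $r^* = 0_n$ and $s^* = 1_n - v^*$, and the hypothesis $v_i(t) > v_c$ on $[T,\infty)$ therefore forces $s_i^* = 1 - v_i^* < 1 - v_c$ for every $i$, i.e. $\max_i s_i^* < 1 - v_c = 1/(\rho(\widetilde G^{-1})\rho(B))$ after substituting \eqref{eq:target_vaccination_criterion}.

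The second step is the spectral bound. Since $\widetilde G^{-1}$ and $\widetilde S^*$ are nonnegative diagonal matrices and $B$ is nonnegative, the entrywise domination $\widetilde G^{-1}\widetilde S^* \le \rho(\widetilde G^{-1})\rho(\widetilde S^*) I_n$ together with monotonicity of the spectral radius on the nonnegative cone gives $\rho(\widetilde G^{-1}\widetilde S^* B) \le \rho(\widetilde G^{-1})\rho(\widetilde S^*)\rho(B) = (\max_i \gamma_i^{-1})(\max_i s_i^*)\rho(B) < (1-v_c)\rho(\widetilde G^{-1})\rho(B) = 1$. Applying Lemma~\ref{lem:x_rep_num} then yields exactly the defining property of a target vaccination criterion. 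Under Assumption~\ref{asm:symm_B} there is an equivalent norm-based route for the middle inequality: $\widetilde G^{-1}\widetilde S^* B$ is diagonally similar to the symmetric matrix $QBQ$ with $Q = (\widetilde S^*)^{1/2}\widetilde G^{-1/2}$, so $\rho(\widetilde G^{-1}\widetilde S^* B) = \|QBQ\|_2 \le \|Q\|_2^2\,\|B\|_2 = \rho(\widetilde S^*\widetilde G^{-1})\rho(B)$, using $\|B\|_2 = \rho(B)$ for symmetric $B$.

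I expect the delicate points to be bookkeeping rather than conceptual. The only genuine care point is strictness of the final inequality: $\rho(\widetilde G^{-1}B) = \rho(\widetilde G^{-1})\rho(B)$ can hold with equality (for instance when all recovery rates $\gamma_i$ coincide), so the strict bound $\rho(\widetilde G^{-1}\widetilde S^* B) < 1$ must be sourced from $s^* = 1_n - v^* < (1-v_c)1_n$, i.e. from carrying the strict hypothesis $v_i(t) > v_c$ over to the equilibrium value $v^*$; one should check that a trajectory satisfying the criterion cannot converge to the trivial healthy equilibrium $(0_n,1_n,0_n,0_n,0_n)$ (where $v = 0_n$) and that the nontrivial healthy equilibrium it may approach indeed inherits $v^* \geq v_c 1_n$. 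Everything else is a direct invocation of Lemmas~\ref{lemma:const}, \ref{lemma:healthy}, and \ref{lem:x_rep_num}.
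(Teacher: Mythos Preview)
Your proposal is correct, and your entrywise-domination argument for the key inequality $\rho(\widetilde G^{-1}\widetilde S^* B) \le \rho(\widetilde G^{-1})\rho(\widetilde S^*)\rho(B)$ is in fact simpler and more general than the paper's. The paper obtains this bound via the log-majorization inequality $\sigma(M_1 M_2) \le \sigma(M_1)\sigma(M_2)$ together with $\rho = \sigma$ for symmetric matrices, which is precisely where Assumption~\ref{asm:symm_B} enters; your Perron--Frobenius monotonicity argument (scaling $B$ entrywise by the diagonal bound $\widetilde G^{-1}\widetilde S^* \le \rho(\widetilde G^{-1})\rho(\widetilde S^*) I_n$) needs only nonnegativity of $B$, so it actually shows that the symmetry hypothesis is unnecessary for this theorem. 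The other difference is that the paper never passes to the equilibrium value $\widetilde S^*$: it bounds $\rho(\widetilde G^{-1}\widetilde S(t)B) < 1$ directly for every $t \ge T$ using $\widetilde S(t) \le I_n - \widetilde V(t)$ from Lemma~\ref{lemma:const} and the strict hypothesis $v_i(t) > v_c$, and only then invokes Lemma~\ref{lem:x_rep_num}. This sidesteps the limit-passing concern you correctly flagged (whether $v_i^* > v_c$ or merely $v_i^* \ge v_c$), at the cost of tacitly applying Lemma~\ref{lem:x_rep_num} in a time-varying sense; your route is cleaner about what Lemma~\ref{lem:x_rep_num} actually says but does have to confront the strictness issue at the equilibrium.
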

\begin{proof}
    Recall that under Assumption \ref{ass:op_graph_strong_connect}, $B$ is strongly connected.
    Since $B$ is non-negative and irreducible, $\rho(B) > 0$, by the Perron–Frobenius theorem \cite{bapat_1998_a}. 
    Therefore, the inverse of $\rho(B) \in \mathbb{R}_{>0}$ exists and $1 - \frac{1}{\rho(\widetilde G^{-1})\rho(B)}$ is well-defined.
    
    After showing that $v_c$ is well-defined, we will show that for any arbitrary $v(t)$ satisfying the target vaccination criterion, the healthy state equilibrium is locally asymptotically stable 
    for the corresponding infection subsystem \eqref{mat:x}, consistent with Definition~\ref{def:target_vaccination_criterion}.
    Consider an arbitrary $v(t)$ such that 
    for all $i\in [n]$:
    \begin{align*}
     v_i(t) &> 1-\frac{1}{\rho(\widetilde G^{-1})\rho(B)}.
    \end{align*}
    Therefore, for all $i\in [n]$,
    \begin{align}
       \left(\rho(\widetilde G^{-1})\rho(B)\right)^{-1} &> 1 - v_i(t). \label{equ:prf_spec_radius}
    \end{align}
    Since \eqref{equ:prf_spec_radius} holds for all $i\in [n]$,
    \begin{align*}\label{equ:x_stable_suff}
    \left(\rho(\widetilde G^{-1})\rho(B)\right)^{-1} &> \max\{1 - v_i(t): i\in [n]\} \\
        \Rightarrow \left(\rho(\widetilde G^{-1})\rho(B)\right)^{-1} &> \rho(I_n - \widetilde V(t)).
    \end{align*}
    By Lemma \ref{lemma:const}, $\widetilde {S}(t) \leq I_n - \widetilde {V}(t)$. 
    Therefore,  
    \begin{equation*}
        1 > \rho(\widetilde G^{-1})\rho(\widetilde{S}(t))\rho(B).
    \end{equation*}
    By the log-majorization theorem \cite[Thm 5.12]{ando1994majorizations}, we have $\sigma(A)\sigma(B) \geq \sigma(AB)$, where $\sigma(M)$ is the singular value of $M$. 
    Also note that $\rho(A) \leq \sigma(A)$ and equality holds if  $A$ is symmetric.
    Therefore, 
    since $\widetilde G^{-1}$ and $\widetilde S^*$ are diagonal and hence symmetric, 
    $\rho(\widetilde G^{-1})\rho(\widetilde S(t)) = \sigma(\widetilde G^{-1})\sigma(\widetilde S(t)) \geq \sigma(\widetilde G^{-1}\widetilde S(t)) = \rho(\widetilde G^{-1}\widetilde S(t))$.
    Similarly, since $B$ is symmetric under Assumption~\ref{asm:symm_B},
    \begin{align*}
        \rho(\widetilde G^{-1}S(t))\rho(B) &= \sigma(\widetilde G^{-1}S(t))\sigma(B)\\
        & \geq \sigma(\widetilde G^{-1}\widetilde S(t)B)\\
        & \geq \rho(\widetilde G^{-1}\widetilde S(t)B),
    \end{align*}
    leading to the following conclusion:
    \begin{equation}\label{equ:x_stable}
        1 > \rho(\widetilde G^{-1}\widetilde S(t) B),
    \end{equation}
    which is the effective reproduction number of the infection subsystem for all time $t \geq T$.
It follows from Lemma~\ref{lem:x_rep_num} that the sub-system \eqref{mat:x} is locally asymptotically stable around $x^* = 0_n$ if \eqref{equ:x_stable} holds for all $t \geq T$ for some $T \in \mathbb{R}_{\geq 0}$.
    Therefore, $v_c$ is a target vaccination criterion.
\end{proof}
Note that in the single-node case ($n=1$), $v_c = 1 - \frac{1}{\mathcal{R}_0}$, which is consistent with the literature \cite{keeling2008modeling}.

In the next lemma, we introduce the target opinion criterion $\underline o^{v^*}$, corresponding to some target vaccination goal $v^*$ and its corresponding control signal $\underline m^{v^*}$.
Notice that the target vaccination goal $v^*$ can be any constant such that $v_c \leq v^* < 1$.
\begin{lemma}\label{lem:healthy_state_form}
If 
\begin{equation}\label{eq:target_opinion_tracking}
    \underline m^{v^*} = (I - \widetilde{\underline O}^{v^*})^{-1} (A + 2L[A])\underline o^{v^*},
\end{equation}
where:
\begin{equation}\label{eq:target_opinion_criterion}
    \underline o^{v^*} = k^{-1}[\Delta H](\frac{\delta}{1 - v^*} - k[H_{\min}]),
\end{equation}
then $(\underline o^{v^*}, (1 - v^*)1_n, 0_n, 0_n, v^* 1_n)$ 
is the unique healthy state equilibrium of system~\eqref{mat:tot} with the media actuator in \eqref{mat:o_w_input}.
\end{lemma}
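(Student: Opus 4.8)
My plan is to prove the two assertions — that the stated point is an equilibrium, and that it is the only healthy state equilibrium — separately. For existence I would substitute $(o,s,x,r,v)=(\underline o^{v^*},(1-v^*)1_n,0_n,0_n,v^*1_n)$ directly into the actuated model \eqref{mat:tot}, \eqref{mat:o_w_input} and check each row. Rows \eqref{mat:x} and \eqref{mat:r} vanish immediately because $x=0_n$ and $r=0_n$ (so $\widetilde R=0$). The one computation that does real work uses \eqref{eq:target_opinion_criterion}: the $i$th entry of $\widetilde{\underline O}^{v^*}\Delta H\,1_n$ equals $\underline o_i^{v^*}k_i[\Delta H]=\frac{\delta_i}{1-v^*}-k_i[H_{\min}]$, so $H(\underline o^{v^*})\,v^*1_n=v^*\bigl(\widetilde{\underline O}^{v^*}\Delta H\,1_n+H_{\min}1_n\bigr)=\frac{v^*}{1-v^*}\,\delta$; multiplying by $\widetilde S=(1-v^*)I_n$ gives $\widetilde S\,H(\underline o^{v^*})\,v^*1_n=v^*\delta=\widetilde D\,v^*1_n$, which makes rows \eqref{mat:v} and \eqref{mat:s} vanish. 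Finally, with $x=0_n$ and $o=\underline o^{v^*}$ held fixed, \eqref{mat:o_w_input} reduces to $\dot o=-(A+2L[A])\underline o^{v^*}+(I-\widetilde{\underline O}^{v^*})\underline m^{v^*}$ (using that diagonal matrices commute), which is $0_n$ exactly because $\underline m^{v^*}$ is defined by \eqref{eq:target_opinion_tracking}; this step also needs $(I-\widetilde{\underline O}^{v^*})$ invertible, i.e. $\underline o^{v^*}\in[0,1)^n$, which I would carry as a standing condition on the parameters and on the choice of $v^*$.

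For uniqueness I would start from an arbitrary healthy state equilibrium, so $x^*=0_n$, and invoke Lemma~\ref{lemma:healthy} (whose proof uses only row \eqref{mat:r} and Lemma~\ref{lemma:const}, both unaffected by the actuator) to get $r^*=0_n$ and $s^*=1_n-v^*$. Since $x^*=0_n$, the opinion row \eqref{mat:o_w_input} decouples from the remaining states and becomes $\bigl(\widetilde K[A]+L[A]+\mathrm{diag}(\underline m^{v^*})\bigr)o^*=\mathrm{diag}(\underline m^{v^*})1_n$; the Gershgorin argument in the proof of Lemma~\ref{lem:healthycons} applies verbatim to $-\bigl(\widetilde K[A]+L[A]\bigr)-\mathrm{diag}(\underline m^{v^*})$ — the added nonnegative diagonal only moves the discs further into the open left half-plane — so that matrix is Hurwitz and in particular nonsingular, hence so is the coefficient matrix, and $o^*$ is uniquely determined; since the point from the existence step satisfies the same equation, $o^*=\underline o^{v^*}$. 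Substituting $o^*=\underline o^{v^*}$, $r^*=0_n$ and $\widetilde S^*=I_n-\widetilde V^*$ into row \eqref{mat:v} leaves $(I_n-\widetilde V^*)H(\underline o^{v^*})v^*=\widetilde D v^*$, which is exactly the equilibrium equation of a networked SIS model with transmission matrix $H(\underline o^{v^*})$ — nonnegative, and irreducible under Assumption~\ref{ass:op_graph_strong_connect} since $H(\underline o^{v^*})\ge H_{\min}$ and $H_{\min}$ already carries the strongly connected strategy-imitation pattern — and recovery matrix $\widetilde D$. I would then apply the statement already used in the proof of Proposition~\ref{Th:healthy_state_ex_n_uniq}, \cite[Prop.~2]{liu2019analysis}, now with $v$ and $H(\underline o^{v^*})$ in place of $x$ and $B$, to conclude that this equation has a unique solution with $v^*\gg 0_n$; since $v^*1_n$ from the existence step is such a solution, $v^*=v^*1_n$, which pins the equilibrium down to the stated form.

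The step I expect to be the real obstacle is the sequencing in the uniqueness half: because $H(\cdot)$ depends on $o$, the $v$-equation cannot be treated as a fixed-transmission SIS equilibrium problem until $o^*$ has been identified, so the opinion-decoupling-plus-Gershgorin step is load-bearing and must come first; one must also be careful that $H(\underline o^{v^*})$ is genuinely irreducible so that \cite[Prop.~2]{liu2019analysis} applies. The remaining work is bookkeeping on admissibility — checking $\underline o^{v^*}\in[0,1)^n$ (so $(I-\widetilde{\underline O}^{v^*})^{-1}$ exists and $\underline o^{v^*}$ is a legitimate opinion) and $\underline m^{v^*}\ge 0_n$ (so it is a legitimate actuator signal) — which unpacks into explicit inequalities relating $v^*$ to $\delta$, $k[H_{\min}]$, $k[\Delta H]$ and the structure of $A+2L[A]$, and which I would state as hypotheses alongside $v_c\le v^*<1$.
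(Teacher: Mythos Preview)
Your proposal is correct and follows the paper's overall architecture: verify the candidate point is an equilibrium by direct substitution, then for uniqueness first pin down $o^*$ from the decoupled opinion equation at $x^*=0_n$ and afterwards identify $v^*$ by reducing \eqref{mat:v} to the networked SIS equilibrium equation and invoking \cite[Prop.~2]{liu2019analysis} with $H(\underline o^{v^*})$ in place of $H_{\min}$. The one tactical difference is in the opinion step. The paper argues by contradiction: it posits an alternative opinion equilibrium $p^*$ with $p^*>\underline o^{v^*}$ (respectively $p^*<\underline o^{v^*}$), rewrites the equilibrium condition as $(A+2L[A])p^*=(I-\widetilde P^*)(I-\widetilde{\underline O}^{v^*})^{-1}(A+2L[A])\underline o^{v^*}$, uses $(I-\widetilde P^*)(I-\widetilde{\underline O}^{v^*})^{-1}<I$ to obtain $(A+2L[A])p^*<(A+2L[A])\underline o^{v^*}$, and then multiplies by $(A+2L[A])^{-1}$ (nonsingular by Gershgorin) to reach a contradiction. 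You instead observe that with $m(t)=\underline m^{v^*}$ fixed, the opinion equilibrium equation is \emph{linear} in $o^*$ with coefficient matrix $\widetilde K[A]+L[A]+\widetilde{\underline M}^{v^*}=A+2L[A]+\widetilde{\underline M}^{v^*}$, and Gershgorin places its spectrum in the open right half-plane, so the solution is unique outright. Your route is shorter and avoids having to justify that multiplication by $(A+2L[A])^{-1}$ preserves the componentwise order; the paper's route keeps $\widetilde{\underline M}^{v^*}$ out of the Gershgorin computation. Both land on $o^*=\underline o^{v^*}$ and then proceed identically, and your explicit check that $H(\underline o^{v^*})\ge H_{\min}$ is irreducible is exactly the hypothesis the paper also needs for the final appeal to \cite[Prop.~2]{liu2019analysis}.
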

\begin{proof}
The existence of the healthy state equilibrium can be shown by setting the state values in system~\eqref{mat:tot} with the media actuator in \eqref{mat:o_w_input} equal to $(\underline o^{v^*}, (1 - v^*)1_n, 0_n, 0_n, v^* 1_n)$, $m(t) = \underline m^{v^*}$ and observing that $\dot o = \dot s = \dot x = \dot r = \dot v = 0_n$.
The uniqueness of the healthy state equilibrium can be shown by assuming, 
by way of 
contradiction, that there exists a $p^* > \underline o^{v^*}$ such that:
\begin{equation}\label{eq:op_at_equil}
    0_n = A(0_n - p^*) - 2L[A]p^* + \widetilde{\underline M}^{v^*}.
\end{equation}
Moving terms around in \eqref{eq:op_at_equil} and using the definition of $\widetilde{\underline M}^{v^*}$ in \eqref{eq:target_opinion_tracking}, we obtain the following:
\begingroup\makeatletter\def\f@size{9}\check@mathfonts
\begin{equation*}
    (A + 2L[A])p^* = (I - \widetilde P^*)(I - \widetilde{\underline O}^{v^*})^{-1}(A + 2L[A])\underline o^{v^*}.
\end{equation*}
\endgroup
Since $p^* > \underline o^{v^*}$, which implies $(I - \widetilde P^*)(I - \widetilde{\underline O}^{v^*})^{-1} < I$, we obtain $(A + 2L[A])p^* < (A + 2L[A])\underline o^{v^*}$ by replacing $(I - \widetilde P^*)(I - \widetilde{\underline O}^{v^*})^{-1}$ with $I$.
Furthermore, since $A + 2L[A]$ is non-singular by the Gershgorin circle theorem, we 
multiply
both sides by $(A + 2L[A])^{-1}$ and conclude that $p^* < \underline o^{v^*}$, which is a contradiction.
Therefore, $p^* > \underline o^{v^*}$ does not hold.
Without loss of generality, the opposite direction where $p^* < \underline o^{v^*}$ can also be rejected with a similar argument, leading to the conclusion that $p^* = \underline o^{v^*}$.
Therefore, since $\underline o^{v^*}$ is the unique solution of \eqref{mat:o_w_input} when $\dot o = 0_n$ and $\widetilde M(t) = \widetilde{\underline M}^{v^*}$,
and $H(o)$ is irreducible by definition in Section~\ref{sec:model} and Assumption~\ref{ass:op_graph_strong_connect} for all $o \in [0, 1]^n$,
we can replace $H_\min$ with $H(\underline o^{v^*})$ in \eqref{eq:healthy_sdot}, and then
it follows from the same argument in the proof of Proposition~\ref{Th:healthy_state_ex_n_uniq} that 
$(\underline o^{v^*}, (1 - v^*)1_n, 0_n, 0_n, v^* 1_n)$ 
is the unique healthy state equilibrium of system~\eqref{mat:tot}.
\end{proof}
Note that Lemma~\ref{lem:healthy_state_form} does not specify that the value of $v^*$ needs to be greater than $v_c$.
Instead, the lemma introduces a tracking scheme for any vaccination goal $v^* \in [0, 1]$.
A natural consequence of this lemma is that the system will return to its original equilibrium when the control signal is relaxed. 
Furthermore, we will see that not all vaccination goals are feasible via
opinion control.
The feasibility of the control signal depends on the upper and lower bounds of the opinions. that is, $\underline o_i^{v^*} \in [0, 1]$ $\forall i\in [n]$.
The next lemma characterizes the constraints in terms of the system parameters and the vaccination goal $v^*$.
\begin{lemma}\label{lem:o_feasiblity}
We have $0_n \leq \underline o_i^{v^*} \leq 1_n$ if and only if $(1 - v^*) k_i(H_{\min}) \leq \delta_i \leq (1 - v^*) k_i(H)$.
\end{lemma}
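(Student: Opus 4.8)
The plan is to treat the claimed equivalence as a direct, reversible algebraic manipulation of the componentwise form of \eqref{eq:target_opinion_criterion}. Writing out the $i$th entry,
\begin{equation*}
  \underline o_i^{v^*} = \frac{1}{k_i[\Delta H]}\left(\frac{\delta_i}{1 - v^*} - k_i[H_{\min}]\right),
\end{equation*}
so that $0 \le \underline o_i^{v^*} \le 1$ amounts to two scalar inequalities that I would clear of their denominators, checking at each step that the multiplying factors are strictly positive so the implications run both ways.

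Before doing so I would record the ingredients that make the manipulation valid. First, $1 - v^* > 0$, which is implicit in the definition \eqref{eq:target_opinion_criterion} (and consistent with the range of $v^*$ discussed before the lemma). Second, $k_i[\Delta H] > 0$ for every $i \in [n]$: under Assumption~\ref{ass:op_graph_strong_connect} the strategy imitation graph is strongly connected, so node $i$ has at least one outgoing edge, and along any such edge the weight $\Delta\eta_{ij} = q\mu_i(m_i^x u_i^x)p_{ij}$ is strictly positive; this is exactly what makes $k^{-1}[\Delta H]$ in \eqref{eq:target_opinion_criterion} well-defined. Third, the identity $k_i[H] = k_i[H_{\min}] + k_i[\Delta H]$, which follows from $H = H(1_n) = \Delta H + H_{\min}$ and the linearity of $k_i[\cdot]$ in the adjacency weights.

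With these in hand the argument splits into the two bounds. For the lower bound, multiplying $\underline o_i^{v^*} \ge 0$ by the positive number $k_i[\Delta H]$ and then by $1 - v^* > 0$ gives the equivalence $\underline o_i^{v^*} \ge 0 \iff \delta_i \ge (1 - v^*)k_i[H_{\min}]$. For the upper bound, $\underline o_i^{v^*} \le 1$ is equivalent to $\tfrac{\delta_i}{1-v^*} - k_i[H_{\min}] \le k_i[\Delta H]$, i.e.\ $\tfrac{\delta_i}{1-v^*} \le k_i[H_{\min}] + k_i[\Delta H] = k_i[H]$, i.e.\ $\delta_i \le (1 - v^*)k_i[H]$. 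Conjoining the two equivalences for every $i \in [n]$ yields the stated characterization. I do not expect a genuine obstacle here: the only points requiring care are the strict positivity of the two scaling factors (so that the order is preserved and the steps invert) and the nonvanishing of $k_i[\Delta H]$ in the denominator; the sole modeling input is the degree identity $k_i[H] = k_i[H_{\min}] + k_i[\Delta H]$.
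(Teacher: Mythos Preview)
Your proposal is correct and follows essentially the same route as the paper's own proof: both argue by writing out the $i$th component of \eqref{eq:target_opinion_criterion} and clearing the positive denominators $k_i[\Delta H]$ and $1-v^*$, then invoking $k_i[H]=k_i[H_{\min}]+k_i[\Delta H]$ for the upper bound. If anything, your version is slightly more careful in explicitly justifying the strict positivity of the scaling factors and the degree identity, whereas the paper simply asserts the chain of equivalences.
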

\begin{proof}
The following inequalities are equivalent by \eqref{eq:target_opinion_criterion}:
\begin{align*}
    0 &\leq \underline o_i^{v^*} \leq 1\\
    0 &\leq k_i^{-1}[\Delta H]\left(\frac{\delta_i}{1 - v^*} - k_i[H_{\min}]\right) \leq 1\\
    0 &\leq  \frac{\delta_i}{1 - v^*} - k_i[H_{\min}] \leq k_i[\Delta H]\\
    (1 - v^*)k_i[H_{\min}] &\leq  \delta_i \leq (1 - v^*)k_i[H],
\end{align*}
which concludes the proof.
\end{proof}
Lemma~\ref{lem:o_feasiblity} shows that only a subset of systems of the form in~\eqref{mat:tot} 
can be controlled via opinions, that is, there exist parameterizations of \eqref{mat:tot} such that the vaccination goal $v^* \geq v_c$ can be achieved only if $\underline o_i^{v^*} > 1$ or $\underline o_i^{v^*} < 0$.
While there is flexibility in choosing $v^*$, the fundamental constraint lies in the relationship between the rate of loss of immunity from vaccination $\delta_i$, the target vaccination criterion $v_c$, and the maximum rate of vaccine-uptake $k_i[H]$.
If the ratio between the rate of loss of immunity from vaccination and the maximum rate of vaccine-uptake is greater than $1 - v_c$, then no control signals of the form in \eqref{mat:o_w_input} are guaranteed to eradicate the disease, unless we have a more relaxed $v_c$ that guarantees stability around the healthy state equilibrium of the system.

The next theorem shows that the suggested control law 
can provide local stability around the healthy state equilibrium of system~\eqref{mat:tot} with a media actuator \eqref{mat:o_w_input}.
\begin{theorem} \label{Thm:control_with_actu}
If $0_n \ll \underline o^{v^*} \ll 1_n$ and $m(t) = \underline m^{v^*}$ as defined in \eqref{eq:target_opinion_criterion} and \eqref{eq:target_opinion_tracking}, respectively, then for all $v^* \geq v_c$ 
\eqref{mat:tot} with media actuator \eqref{mat:o_w_input} is locally asymptotically stable around the healthy state equilibrium $(\underline o^{v^*}, (1 - v^*)1_n, 0_n, 0_n, v^* 1_n)$.
\end{theorem}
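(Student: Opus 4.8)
The plan is to combine the three structural results already established—Lemma~\ref{lem:healthy_state_form} (the controlled system has $(\underline o^{v^*}, (1-v^*)1_n, 0_n, 0_n, v^*1_n)$ as its unique healthy state equilibrium when $m(t)=\underline m^{v^*}$), Theorem~\ref{thm:sufficient_x_stable} (which gives $v_c$), and Lemma~\ref{lem:x_rep_num} (local asymptotic stability of the infection subsystem when $\mathcal{R}_t<1$)—into a cascade/singular-perturbation-style stability argument for the full coupled system linearized at that equilibrium. First I would write down the Jacobian of \eqref{mat:tot}-with-\eqref{mat:o_w_input} at the equilibrium and argue it is block (upper- or lower-) triangular with respect to the natural partition into the $x$-block and the rest. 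The key observation is that at $x^*=0_n$, $r^*=0_n$, the dynamics of $x$ decouple from $(o,s,r,v)$ to first order: $\dot x = \widetilde S(t)Bx - \widetilde Gx$ depends on the other states only through $\widetilde S(t)$, and perturbing $\widetilde S$ multiplies $x$, which is already first-order, so that coupling contributes nothing to the linearization of the $x$-equation. Hence the Jacobian has the block form $\begin{bmatrix} J_{\mathrm{rest}} & * \\ 0 & J_x \end{bmatrix}$ and $\alpha$ of the whole equals $\max\{\alpha(J_{\mathrm{rest}}), \alpha(J_x)\}$.

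Next I would bound each diagonal block. For $J_x = \widetilde S^* B - \widetilde G$ with $\widetilde S^* = (1-v^*)I_n$: since $v^* \ge v_c = 1 - 1/(\rho(\widetilde G^{-1})\rho(B))$, the computation inside the proof of Theorem~\ref{thm:sufficient_x_stable} (run with the constant value $1-v^*$ in place of $1-v_i(t)$, using $B$ symmetric and $\widetilde G^{-1}$ diagonal, the log-majorization/singular-value inequality) gives $\rho(\widetilde G^{-1}\widetilde S^* B) \le 1$, hence $\alpha(\widetilde S^* B - \widetilde G) \le 0$; strict inequality $v^* > v_c$ or, if $v^*=v_c$, invoking Lemma~\ref{lem:x_rep_num} and the Perron–Frobenius structure to get $\alpha(J_x)<0$. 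For $J_{\mathrm{rest}}$, the $o$-subsystem linearizes to $-(A+2L[A]) - \widetilde{\underline M}^{v^*}$ acting on $\delta o$ (plus a $\delta x$ coupling that lives in the off-diagonal block), and by the Gershgorin argument already used twice in the excerpt this matrix is Hurwitz (all discs strictly in the open left half plane, now pushed further left by the nonnegative diagonal $\widetilde{\underline M}^{v^*}$). The $(s,r,v)$ part, evaluated at $x^*=r^*=0_n$, reduces to the linearization of \eqref{eq:healthy_sdot}: $\dot{\delta v} = (I-\widetilde V^*)H_{\min}\delta v - \widetilde D\,\delta v$ together with $\dot{\delta r} = \widetilde G\,\delta x - \widetilde W\,\delta r - \widetilde R^* H(\underline o^{v^*})\delta v$ and the $s$-equation determined by conservation (Lemma~\ref{lemma:const}); here I would use that $\underline o^{v^*}$ is, by construction in Lemma~\ref{lem:healthy_state_form}, exactly the fixed point making $(I-\widetilde V^*)H(\underline o^{v^*})v^* = \widetilde D v^*$, so $v^*\gg 0_n$ is the Perron-type equilibrium and the relevant block is Hurwitz by the same $\cite[\text{Prop.~2}]{liu2019analysis}$-type argument cited in Proposition~\ref{Th:healthy_state_ex_n_uniq}, with $\widetilde W>0$ handling the $r$-direction.

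The main obstacle I anticipate is handling $\widetilde S(t)$, $\widetilde R(t)$ carefully in the linearization: because $s$, $r$, $v$ are themselves dynamic (not frozen), one must be precise that the coupling from $(o,s,r,v)$ into the $x$-equation is genuinely zero at first order and that the remaining cross-terms all sit in the off-diagonal block that does not affect the spectrum of a block-triangular matrix—equivalently, verifying that the ordering of the state blocks really does triangularize the Jacobian and that no feedback loop through $x$ re-enters at first order. A secondary subtlety is the boundary case $v^* = v_c$, where Theorem~\ref{thm:sufficient_x_stable} as stated gives a strict inequality in the definition of the criterion but Lemma~\ref{lem:healthy_state_form} allows $v^* = v_c$; I would resolve this by noting $\rho(\widetilde G^{-1}\widetilde S^* B) \le 1$ with equality only in a degenerate case excluded by strong connectivity and the Perron–Frobenius simplicity of the dominant eigenvalue, so that $\alpha(J_x)<0$ strictly. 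Once the block-triangular structure and the Hurwitz-ness of each diagonal block are in hand, Lyapunov's indirect method delivers local asymptotic stability of the full system around $(\underline o^{v^*}, (1-v^*)1_n, 0_n, 0_n, v^*1_n)$, completing the proof.
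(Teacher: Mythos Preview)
Your overall architecture---linearize, exhibit block-triangular structure, and verify Hurwitz-ness of each diagonal block---matches the paper's proof, and your handling of $J_x$ and of the opinion block $J_o=-(A+2L[A]+\widetilde{\underline M}^{v^*})$ via Gershgorin is the same. The paper makes the triangular structure fully explicit by ordering the state as $(x,o,r,v)$ and writing down a $4\times 4$ lower block-triangular Jacobian, which cleanly isolates $J_r=-(\widetilde W+v^*\widetilde K[H(\underline o^{v^*})])$ and $J_v=(1-v^*)H(\underline o^{v^*})-\widetilde D$ as separate diagonal blocks rather than bundling them into a single ``$J_{\mathrm{rest}}$''.

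The substantive gap is your treatment of the $v$-block. First, at the controlled equilibrium $o^*=\underline o^{v^*}\neq 0_n$, the reduced vaccination dynamics are \emph{not} the linearization of \eqref{eq:healthy_sdot}: the matrix that appears is $H(\underline o^{v^*})$, not $H_{\min}$, so the line $\dot{\delta v}=(I-\widetilde V^*)H_{\min}\delta v-\widetilde D\,\delta v$ is incorrect. Second, you justify Hurwitz-ness of this block by ``the same \cite[Prop.~2]{liu2019analysis}-type argument'' from Proposition~\ref{Th:healthy_state_ex_n_uniq}, but that result is an existence/uniqueness statement for the endemic equilibrium, not a linear-stability statement; it does not by itself give $\alpha(J_v)<0$. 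Third---and this is the symptom that something is missing---your argument never uses the hypothesis $0_n\ll\underline o^{v^*}$. The paper's proof uses exactly this assumption, via Lemma~\ref{lem:o_feasiblity}, to obtain the strict inequality $\widetilde D\gg(1-v^*)\widetilde K[H_{\min}]$, and then bounds $\alpha(J_v)$ by $\alpha\big((1-v^*)[H_{\min}-\widetilde K[H_{\min}]]\big)\le 0$ through an explicit substitution of the definition of $\underline o^{v^*}$ into $H(\underline o^{v^*})$ combined with Gershgorin. Without invoking $\underline o^{v^*}\gg 0_n$ you cannot get the strict negativity of $\alpha(J_v)$; in fact at $\underline o^{v^*}=0_n$ one has $\delta_i=(1-v^*)k_i[H_{\min}]$ and the Gershgorin discs of $J_v$ touch the imaginary axis. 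So the $J_v$ step needs to be redone along the paper's lines.
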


\begin{proof}
Since $s(t) = 1_n - x(t) - r(t) - v(t)$ by Lemma~\ref{lemma:const}, we can linearize 
\eqref{mat:tot} around the healthy state equilibrium as follows, without including the dynamics of $s(t)$:
\begin{equation*}
    \begin{bmatrix}
        \dot x\\
        \dot o\\
        \dot r\\
        \dot v
    \end{bmatrix} = \begin{bmatrix}
        J_x & 0 & 0 & 0\\
        A & J_o & 0 & 0\\
        \widetilde G & 0 & J_r & 0\\
        0 & v^*(1 - v^*)I & v^*\widetilde K[H(\underline o)] & J_v
    \end{bmatrix}\begin{bmatrix}
        x\\
        o\\
        r\\
        v
    \end{bmatrix}
\end{equation*}
where $J_o = -(A + 2L[A] + \widetilde{\underline M}^{v^*})$, $J_x = (1 - v_c)B - \widetilde G$, $J_r = -(\widetilde W + v^* K[H(\underline o^{v^*})])$, and $J_v = (1 - v^*)H(\underline o^{v^*}) - \widetilde D$.

The linearized system is stable if and only if each diagonal block is stable because the Jacobian is lower triangular.
By Theorem~\ref{thm:sufficient_x_stable}, the infection subsystem $J_x$ is locally asymptotically stable around $x^* = 0_n$ for all $v^* \geq v_c$.
Moreover, we notice that each Gershgorin disc of $J_o$ has a center at $-(\underline m_i^{v^*} + 2k_i[A])$ with radius $k_i[A]$, and $\underline m_i^{v^*}$ is non-negative.
Therefore, the spectral abscissa of $J_o$ lies on the open left half plane by the Gershgorin circle theorem.
Furthermore, $J_r$ is locally asymptotically stable by observing that 
$-(\widetilde W + v^* K[H(\underline o^{v^*})])$ is a negative diagonal
matrix.

Lastly, an upper bound of the spectral abscissa of $J_v$ can be obtained by replacing $\widetilde D$ with $(1 - v^*)k_i[H_{\min}]$,
because Lemma~\ref{lem:o_feasiblity} states that $\widetilde D \gg (1 - v^*) \widetilde K(H_{\min})$ if $\underline o^{v^*} \gg 0_n$. By assumption, $\underline o^{v^*} \gg 0_n$ holds.
Then, by shifting the center of each Gershgorin disc to the left, we obtain:
\begin{align}\label{eq:pf_ctrl_with_actu_J_v}
    \alpha(J_v) &= \alpha((1 - v^*)H(\underline o^{v^*}) - \widetilde D)\nonumber\\
    &< \alpha((1 - v^*)H(\underline o^{v^*}) - (1 - v^*) \widetilde K[H_{\min}])\nonumber\\
    & = \alpha \left((1 - v^*)\left[H(\underline o^{v^*}) - \widetilde K[H_{\min}]\right]\right).
\end{align}
By replacing each term 
with 
its
corresponding definition: $H(\underline o^{v^*}) := \widetilde{\underline O}^{v^*}\Delta H + (H_{\min} - K[H_{\min}])$ and $\widetilde{\underline O}^{v^*} := \widetilde K^{-1}[\Delta H](\frac{1}{1 - v^*}\widetilde D - \widetilde K[H_{\min}])$, 
we have the following:
\begingroup\makeatletter\def\f@size{9}\check@mathfonts
\begin{equation}\label{eq:pf_ctrl_with_actu_expand}
    H(\underline o^{v^*})
     = \left(\frac{1}{1 - v^*} \widetilde D - \widetilde K[H_\min] \right)\widetilde K^{-1}[\Delta H] + H_{\min}.
\end{equation}
\endgroup
Notice that by replacing $\widetilde D$ with $(1 - v^*) \widetilde K(H_{\min})$ in \eqref{eq:pf_ctrl_with_actu_expand}, terms cancel out and we are left with $H(\underline o^{v^*}) = H_{\min}$. 
Therefore, if we substitute $H(\underline o^{v^*})$ from \eqref{eq:pf_ctrl_with_actu_expand} into \eqref{eq:pf_ctrl_with_actu_J_v} 
and apply the inequality $\widetilde D \gg (1 - v^*)$, we have:
\begin{align*}
    \alpha(J_v) &< \alpha \left((1 - v^*)\left[H(\underline o^{v^*}) - \widetilde K[H_{\min}]\right]\right)\\
    & < \alpha \left((1 - v^*)\left[H_{\min} - \widetilde K[H_{\min}]\right]\right).
\end{align*}
By the Gershgorin circle theorem, $\alpha \left((1 - v^*)\left[H_{\min} - \widetilde K[H_{\min}]\right]\right) \leq 0$, which completes the proof.
\end{proof}

\section{Simulation}\label{sec:sim}
In this section, we illustrate and visualize the results in Section~\ref{sec:main}. 
There are 4 nodes in our simulated system, where each node represents an individual or a community.
The key difference between a network of communities and a network of individuals in a networked epidemic model is that whether we allow self-loops in the disease transmission network.
When modelling interaction between individuals, it is most natural to assume that each individual does not induce sickness to themselves, while the opposite is true when modelling interaction between communities.

The infectious network, the strategy immitation network, and the opinion network are characterized by the following matrices:
\begingroup\makeatletter\def\f@size{7}\check@mathfonts
\begin{align*}
    B &= \left[\begin{matrix}
        0& 0.4& 0.35 & 0.3  \\
        0.4& 0&  0.2& 0.25  \\
        0.35& 0.2& 0& 0.2   \\
        0.3&  0.25& 0.2 & 0 \\
    \end{matrix}\right],~
    H = \left[\begin{matrix}
        0&   0.4&  &0.2   &0.5 \\
        0.7&  0&   &0.8   &1 \\
        1&  0.5&   &0    &0.7\\
        1&  0.8&   &0.5   &0 \\
    \end{matrix}\right].
\end{align*}
\endgroup
Other parameters are defined as follows: 

\begin{itemize}
    \item $\gamma = [0.4,1.6,0.8,0.3]$;
    \item $\omega = [0.4,0.5,0.6,0.7]$;
    \item $\delta = [0.1,0.2,0.2,0.1]$;
    \item $\eta^{\min}_{ij} = 0.1$, $\forall i\neq j \in [n]$;
    \item $\alpha_{ij} = 0.1$, $\forall i,j \in [n]$.
\end{itemize}
The initial conditions of the system:
$$(s(t_0),x(t_0),r(t_0),v(t_0),o(t_0)),$$
are
$((0,0.2, 0.3, 0.2)$, $(0.2, 0.4, 0.2, 0.3)$,
$(0.6, 0.3, 0.5, 0.5)$, $(0.2,0.1,0,0)$, $(0,0.1,0.9,1))$, respectively, where $t_0 = 0$ in the following simulations. 




Figure~\ref{fig:uniq} illustrates the existence 
of the healthy state equilibrium with a non-zero vaccination equilibrium which is formally presented in Proposition~\ref{Th:healthy_state_ex_n_uniq}.
We change the initial conditions (Figure~\ref{fig:uniq} (b)) to:
            $s(t_0) = (0.9,0.2,0.5,0.4)$,
            $x(t_0) = (0,0.4,0.2,0.1)$,
            $r(t_0) = (0.1,0,0,0)$,
            $v(t_0) = (0,0.4,0.3,0.5)$, and
            $o(t_0)$ $=$ $(0,0.1,$ $0.9,1)$.
Figure~\ref{fig:uniq} shows that $v^* \gg 0$ is invariant to a single perturbation in the initial states. 
According to Lemma \ref{lemma:healthy}, at a healthy equilibrium, if $v^*$ is unique, $s^*$, $r^*$, and $x^*$ are also unique. 
\begin{figure}
    \centering
\begin{subfigure}{.23\textwidth}
  \centering
  \includegraphics[width=1\linewidth]{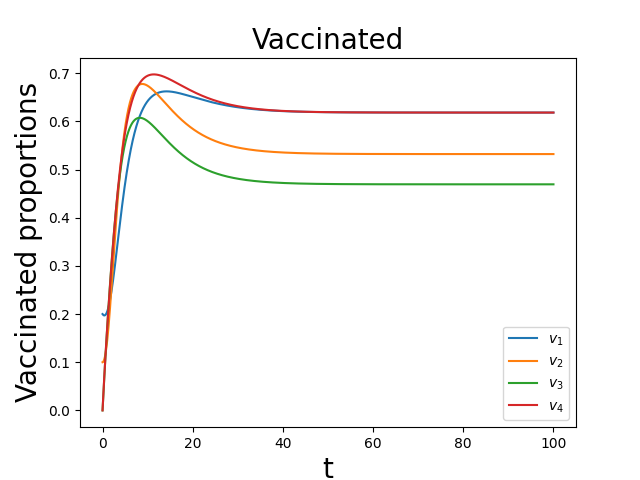}
  \caption{$v(t)$ with the original \\initial condition.}
  \label{fig:sub_orig}
\end{subfigure}%
\begin{subfigure}{.23\textwidth}
  \centering
  \includegraphics[width=1\linewidth]{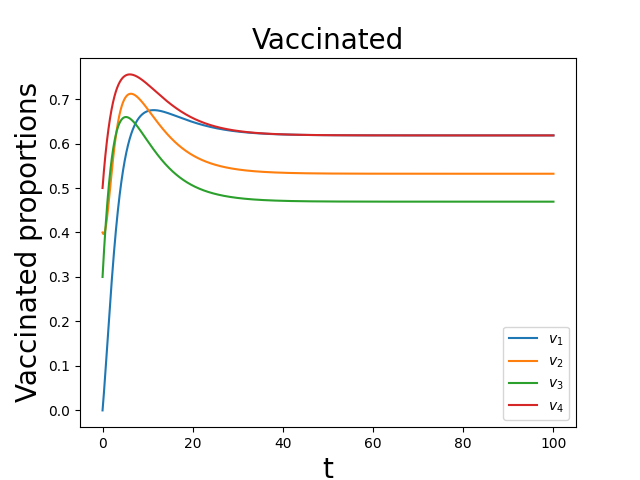}
  \caption{$v(t)$ with the second \\initial condition.}
  \label{fig:sub_init}
\end{subfigure}
    \caption{The existence and uniqueness of the healthy state equilibrium, illustrating Proposition~\ref{Th:healthy_state_ex_n_uniq}. Only the vaccination levels over time are shown.}
    \label{fig:uniq}
\end{figure}

\par To show the impact of the proposed 
controller, we first construct a system with an endemic state equilibrium, which is as an equilibrium where $x^* > 0_n$,
by setting
$\gamma = [0.2,0.8,0.4,0.15]$.
Other parameters are the same as in the previous setting. 
Note $B$ is symmetric in the original setup. 
Figure~\ref{before_control} shows the dynamics of the system without the control strategy.
\begin{figure}
    \centering
\begin{subfigure}{.23\textwidth}
  \centering
  \includegraphics[width=1\linewidth]{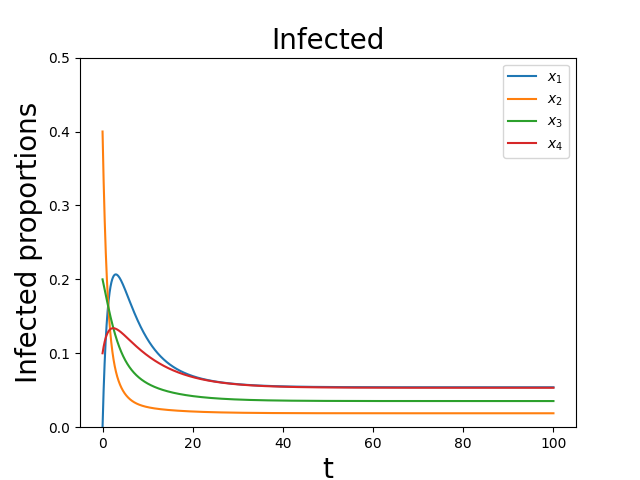}
  \label{fig:sub_i_before}
\end{subfigure}%
\begin{subfigure}{.23\textwidth}
  \centering
  \includegraphics[width=1\linewidth]{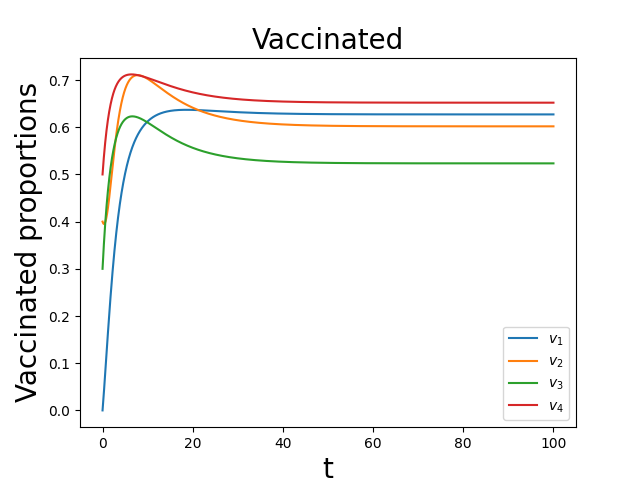}
  \label{fig:sub_v_before}
\end{subfigure}
    \caption{The system with an endemic state equilibrium.}
    \label{before_control}
\end{figure}
\par Then, we apply the control strategy based on Theorem~\ref{Thm:control_with_actu}. 
If there exists a $t_1^i$ such that $o_i(t_1^i)< \underline o_i^{v_c}$ defined in \eqref{eq:target_opinion_criterion}, then $m_i (t) = \underline m_i^{v_c}$ defined in \eqref{eq:target_opinion_tracking}, $\forall t > t_1^i$, and $m_i(t) = 0$, $\forall t < t_1^i$, where $t_1^i$ represents when the control applied on Node $i$ starts. 
In our case, as shown by Figure \ref{after_control}, $t_1^{[1-3]}$ are $0$, $0$, $2.29$, respectively. 
Node 4 is never connected to the actuator, since $o_4(t) \geq \underline o_i^{v_c}$, $\forall t > t_0$. 
In Figure~\ref{after_control}, we can see that the system with the applied control strategy results in a healthy state equilibrium. 
Note, we need to 
maintain
the control input in order to ensure that
the healthy state equilibrium remains stable. 
Figure~\ref{end_control} shows that the healthy state equilibrium is no longer stable if we remove the media actuator at $t_2 = 74$, and the system returns to its natural endemic equilibrium. As indicated by the red dashes in Figure~\ref{end_control}, the system no longer maintains the healthy state equilibrium $\forall t > t_2$. 

\begin{figure}
    \centering
\begin{subfigure}{.23\textwidth}
  \centering
  \includegraphics[width=1\linewidth]{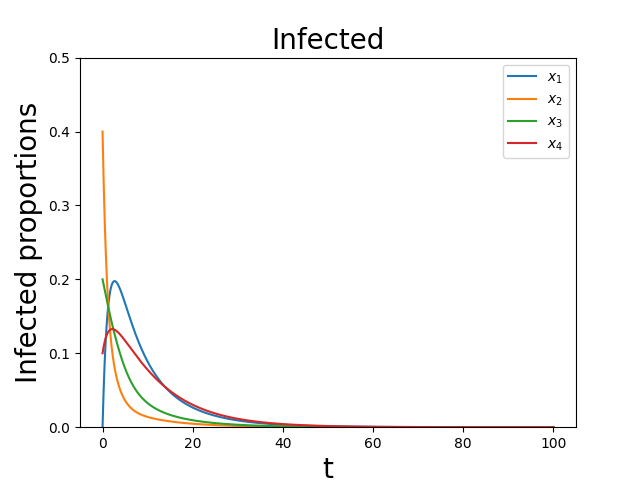}
  \label{fig:sub_i_after}
\end{subfigure}%
\begin{subfigure}{.23\textwidth}
  \centering
  \includegraphics[width=1\linewidth]{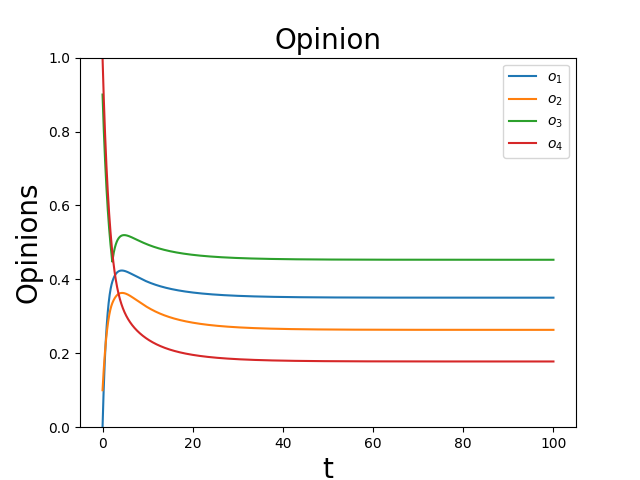}
  \label{fig:sub_o_after}
\end{subfigure}
\begin{subfigure}{.23\textwidth}
  \centering
  \includegraphics[width=1\linewidth]{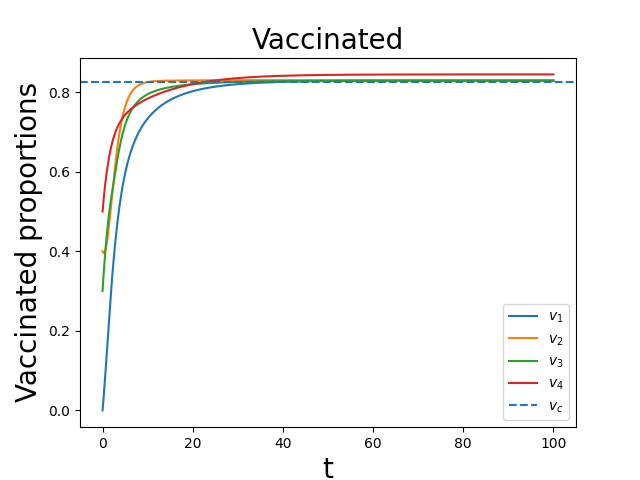}
  \label{fig:sub_v_after}
\end{subfigure}
\begin{subfigure}{.23\textwidth}
  \centering
  \includegraphics[width=1\linewidth]{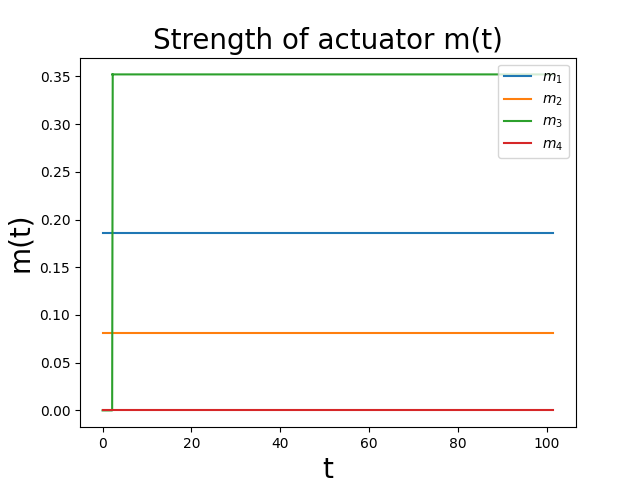}
  \label{fig:strength}
\end{subfigure}
    \caption{The system with the control strategy applied.}
    \label{after_control}
\end{figure}

\begin{figure}
    \centering
\begin{subfigure}{.23\textwidth}
  \centering
  \includegraphics[width=1\linewidth]{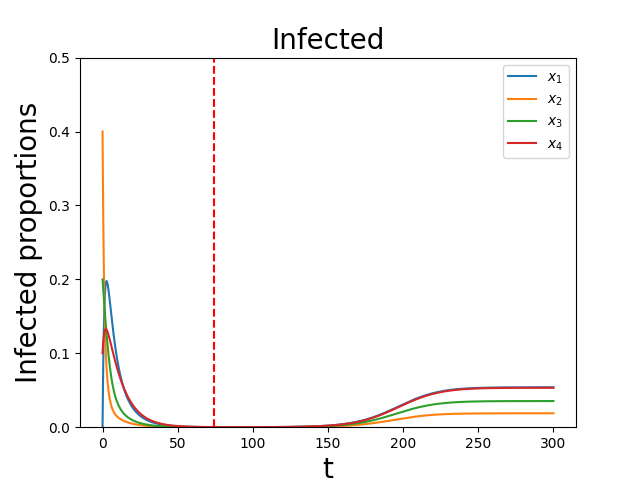}
  \label{fig:sub_i_shutdown}
\end{subfigure}%
\begin{subfigure}{.23\textwidth}
  \centering
  \includegraphics[width=1\linewidth]{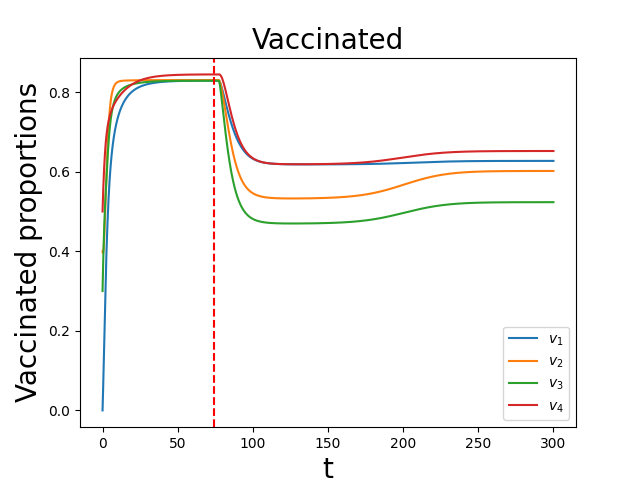}
  \label{fig:sub_v_shutdown}
\end{subfigure}
    \caption{The healthy state equilibrium is unstable after ending the control. The vertical red dotted line indicates the time when the control signal is lifted.}
    \label{end_control}
\end{figure}

\section{Conclusion}\label{sec:conc}
In this work, we propose a networked $SIRS-V_o$ model to study the interaction between opinion dynamics, self-interest decision-making, vaccine uptake,
and disease spreading process. 
We provide in this paper a proof on the existence of the healthy state equilibrium of the networked $SIRS-V_o$ model, and show that the problem can be reduced to the equilibrium analysis of a single viral networked SIS model.
We then characterize 
the local stability condition of the infection subsystem of the networked multi-layer epidemic model around the healthy state equilibria.
Building on this model, we develop target vaccination and opinion criteria 
which guarantee 
disease eradication.
These criteria provide sufficient conditions that leverage opinions and vaccination to eradicate the spread of an epidemic.
We then show analytically that the family of control signals which satisfies the target vaccination and opinion criteria stabilizes the networked $SIRS-V_o$ model locally around the healthy state equilibrium under some mild assumptions.
Among these mild assumptions, the requirement that the target vaccination criterion is feasible at node $i$ only when the ratio between the rate of loss of immunity $\delta_i$
and the maximum rate of vaccine-uptake $k_i[H]$ is greater than $1 - v_c$ gives us insight into the limitation of epidemic control through the influence of a centralized media actuator.
We also note that the infection subsystem returns to its natural equilibrium after the control algorithm is relaxed, both analytically and in simulations.
In the future, we will extend the control analysis to study robust government-subsidized vaccination programs for disease control under adversarial attacks on opinions through fake news or other mediums.



\bibliographystyle{IEEEtran}
\bibliography{refs}

\end{document}